\newcommand{\ind}{1\!\!1_}
\newcommand{\dd}{\mathrm{d}}
\newcommand{\mykeywords}{Asymptotic Statistics. 
Covariance Estimation.
Time Series.}
\newtheorem{thm}{Theorem}[section]%
  \newtheorem{cor}[thm]{Corollary}%
  \newtheorem{prop}[thm]{Proposition}%
  \newtheorem{lem}[thm]{Lemma}%
  \newtheorem{hyp}[thm]{Assumption}%
\newcommand{\mysubjclass}{62G05, 62G20.}
\title{Estimation error for blind Gaussian time series prediction}
\author{T. Espinasse, F. Gamboa and  J-M. Loubes}
\begin{document}
\maketitle \noindent
\begin{abstract}
%FIXME refaire

We tackle the issue of the blind prediction of a Gaussian time series. For this, we construct a projection operator build by plugging an empirical covariance estimation into a Schur complement decomposition of the projector. This operator is then used to compute the predictor. Rates of convergence of the estimates are given.

% 
% 
% In the frame of time series analysis,
% we compute the quadratic error in the blind estimation of the projection operator for prediction with infinite past. The estimation is made using only a single finite sample of the time series. It is performed by plugging the empirical covariance in a Schur complement decomposition  of the projector. 
%Under assumptions on the regularity of the spectral density, we provide the rate of convergence for the error. 
\end{abstract}
\tableofcontents
{ \noindent
\\ 
 \textbf{Keywords}: \mykeywords \\
$\;$ \\ \textbf{Subject Class. MSC-2000}: \mysubjclass}
%\newpage

\section*{Introduction}

In many concrete situations the statistician observes a finite path $X_1, \ldots, X_n$ of a real temporal phenomena which can be modeled as realizations of a stationary process ${\mathbf{X}}:=(X_t)_{t\in\mathbb{Z}}$ (we refer, for example, to \cite{davis}, \cite{Ste} and references therein).

Here we consider a second order weakly stationary process, which implies that its mean is constant and that $\mathbb{E}(X_tX_s)$ only depends on the distance between $t$ and $s$. In the sequel, we will assume that the process is Gaussian, which implies that it is also strongly stationary, in the sense that, for any $t,n \in \mathbb{Z}$, 
$$(X_1,\cdots,X_n)
\stackrel{\mathcal{L}}{=}
(X_{t+1},\cdots,X_{t+n}),\;(t\in\mathbb{Z},n\in\mathbb{N}).$$

Our aim is to predict this series when only a finite number of past values are observed. Moreover, we want a sharp control of the prediction error. For this, recall that, for Gaussian processes, the best predictor of $X_t, t \geq 0$, when observing $X_{-N}, \cdots, X_{-1}$, is obtained by a suitable linear combination of the $(X_i)_{i = -N, \cdots, -1}$. This predictor, which converges to the predictor onto the infinite past, depends on the unknown covariance of the time series. Thus, this covariance has to be estimated. Here, we are facing a blind filtering problem, which is a major difficulty with regards to the usual prediction framework. 

Kriging methods often impose a parametric model for the covariance (see \cite{krig2}, \cite{AzDa}, \cite{Ste}). This kind of spatial prediction is close to our work. Nonparametric estimation may be done in a functional way (see \cite{Lili10}, \cite{pred2}, \cite{pred4}). This approach is not efficient in the blind framework. Here, the blind problem is bypassed using an idea of Bickel \cite{bickel} for the estimation of the inverse of the covariance. He shows that the inverse of the empirical estimate of the covariance is a good choice when many samples are at hand.

We propose in this paper a new methodology, when only a path of the process is observed. For this, following Comte \cite{comte}, we build an accurate estimate of the projection operator. Finally this estimated projector is used to build a predictor for the future values of the process. Asymptotic properties of these estimators are studied.

The paper falls into the following parts. In Section \ref{s:notations}, definitions and technical properties of time series are given. Section \ref{s:frame} is devoted to the construction of the empirical projection operator whose asymptotic behavior is stated in Section \ref{s:rate}. Finally, we build a prediction of the future values of the process in Section \ref{section_schur}. All the proofs are gathered in Section \ref{s:append}.

\section{Notations and preliminary definitions}
\label{s:notations}

In this section, we present our general frame, and recall some basic properties about time series, focusing on their predictions.

Let $\mathbf{X} = (X_k)_{k\in\mathbb{Z}}$ be a zero-mean Gaussian stationary process.  Observing a finite past $X_{-N}, \cdots, X_{-1}$ ($N\geq 1$) of the process, we aim at predicting the present value $X_0$ without any knowledge on the covariance operator.

Since $X$ is stationary, let $r_{i-j }  := \operatorname{Cov}(X_i,X_j), (i,j \in \mathbb{Z}) $ be the covariance between $X_i$ and $X_j$.
% This defines a sequence $(r_k)_{k \in \mathbb{N}}$. 
Here we will consider short range dependent processes, and thus we assume that
$$\sum_{k \in \mathbb{Z}} r_k^2 < + \infty,$$
So that there exists a measurable function
% Under this assumption, we get the existence of the 
$f^\star \in \mathbb{L}_2\left(\left[0,2\pi\right)\right)$ defined by %function 
\begin{equation*}
f^\star(t) : = \sum_{k= - \infty}^{\infty} r_{k}e^{ikt},  (a.e.)
\end{equation*} 

This function is the so-called spectral density of the time series. It is real, even and non negative. As $\mathbf{X}$ is Gaussian, the spectral density conveys all the information on the process distribution.\\

Define the covariance operator $\Gamma$ of the process $\mathbf{X}$, by setting
$$\forall i,j \in \mathbb{Z},\Gamma_{ij} = \operatorname{Cov}(X_i,X_j) .$$ 

Note that $\Gamma$ is the Toeplitz operator associated to $f^\star$. It is usually denoted by $T(f^\star)$ (for a thorough overview on the subject, we refer to \cite{bottcher}). 
This Hilbertian operator acts on $l^2(\mathbb{Z})$ as follows 
$$\forall u \in l^2(\mathbb{Z}), i \in \mathbb{Z}, (\Gamma u)_i := \sum_{j \in \mathbb{Z}} \Gamma_{ij}u_j  = \sum_{j \in \mathbb{Z}} r_{i-j}u_j = (T(f^\star)u)_i .$$
For sake of simplicity, we shall from now denote Hilbertian operators as infinite matrices. 

Recall that for any bounded Hilbertian operator $A$, the spectrum $\operatorname{Sp}(A)$ is defined as the set of complex numbers $\lambda$ such that 
$\lambda \operatorname{Id} - A$ is not invertible (here $\operatorname{Id}$ stands for the identity on $l^2(\mathbb{Z})$).

The spectrum of any Toeplitz operator, associated with a bounded function, satisfies the following property (see, for instance \cite{davis}):
$$\forall f \in \mathbb{L}_\infty \left(\left[0,2\pi\right)\right), \operatorname{Sp}(T(f)) \subset \left[\min(f), \max(f) \right] .$$

Now consider the main assumption of this paper :
\begin{hyp}\label{a:fbornee}
\begin{equation*}
\exists m,m' >0, \forall t \in \left[0,2\pi\right), m < f^\star(t) < m'.
\end{equation*}
\end{hyp} 

This assumption ensures the invertibility of the covariance operator,  since $f^\star$ is bounded away from zero.
As a positive definite operator, we can define its square-root $\Gamma^{\frac{1}{2}}$.
 Let 
 $Q$ be any linear operator acting on $l^2(\mathbb{Z})$, consider the operator norm $
\left\| Q \right\|_{2,op} :=  \sup_{u_\in l_2(\mathbb{Z}), \left\|u\right\|_2 = 1} \left\| Qu\right\|_2,$
and define the warped operator norm as 
\begin{equation*}
\left\| Q \right\|_{\Gamma} := \sup_{u_\in l_2(\mathbb{Z}), \left\|\Gamma^{\frac{1}{2}}u\right\|_2 = 1} \left\| \Gamma^{\frac{1}{2}}Qu\right\|_2.
\end{equation*} 
Note that, under Assumption \eqref{a:fbornee} $\left\| \Gamma\right\|_{2,op} \leq m'$, hence the warped norm $\left\| .  \right\|_\Gamma$ is well
 defined and equivalent to the classical one
\begin{equation*}
\frac{m}{m'} \left\| Q \right\|_{2,op} \leq \left\| Q \right\|_{\Gamma} \leq \frac{m'}{m}\left\| Q \right\|_{2,op}.
\end{equation*}

Finally, both the covariance operator and its inverse are continuous with respect to the previous norms.\\

\indent The warped norm is actually the natural inducted norm over the Hilbert space $$H = \left(l_2(\mathbb{Z}),\langle.,. \rangle_\Gamma \right),$$ 
where 
$$\langle x, y \rangle_\Gamma := x^T\Gamma y= \sum_{i,j \in \mathbb{Z}} x_i \Gamma_{ij} y_j.$$

From now on, all the operators are defined on $H$. 
Set $$\mathbb{L}^2(\mathbb{P}) := \left\{Y \in \overline{\operatorname{Span}}\left((X_i)_{i \in \mathbb{Z}}\right), \mathbb{E}[Y^2]<+\infty \right\}$$

The following proposition (see for instance \cite{davis}) shows the
particular interest of $H$ :
\begin{prop} The map 
\begin{align*}
\Phi : \quad H & \rightarrow \mathbb{L}^2(\mathbb{P}) \\
 u & \rightarrow u^T \mathbf{X}= \sum_{i \in \mathbb{Z}} u_i X_i  .
\end{align*}
defines a canonical isometry between $H$ and $\mathbb{L}_2(\mathbb{P})$.
 \end{prop}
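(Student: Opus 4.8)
The plan is to establish the defining identity of an inner-product isometry first on the dense subspace of finitely supported sequences, then to extend it to all of $H$ by a density-and-completeness argument, and finally to check surjectivity using the closed-span structure of $\mathbb{L}^2(\mathbb{P})$. Linearity of $\Phi$ being immediate, the whole content lies in the norm preservation and the range.

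First I would take $u$ with finite support and compute directly, using bilinearity of the expectation together with $\Gamma_{ij} = \operatorname{Cov}(X_i,X_j)$,
\begin{equation*}
\mathbb{E}\left[\Phi(u)^2\right] = \mathbb{E}\left[\left(\sum_{i} u_i X_i\right)^2\right] = \sum_{i,j} u_i u_j \Gamma_{ij} = u^T \Gamma u = \langle u, u \rangle_\Gamma = \left\| u \right\|_\Gamma^2.
\end{equation*}
By polarization the same manipulation yields $\mathbb{E}[\Phi(u)\Phi(v)] = \langle u, v \rangle_\Gamma$ for all finitely supported $u,v$, so that $\Phi$ already preserves the inner product on this subspace.

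Next I would extend $\Phi$ by density. Since, under Assumption \ref{a:fbornee}, the warped norm $\left\|.\right\|_\Gamma$ is equivalent to the usual $l_2$ norm, the finitely supported sequences, which are dense in $(l_2(\mathbb{Z}),\left\|.\right\|_2)$, are also dense in $H$. Given $u \in H$, let $u^{(n)}$ denote its truncation to indices $|i|\le n$; the sequence $(u^{(n)})_n$ is Cauchy in $H$, so applying the identity above to the differences $u^{(n)}-u^{(p)}$ shows that $(\Phi(u^{(n)}))_n$ is Cauchy in $\mathbb{L}^2(\mathbb{P})$. As $\mathbb{L}^2(\mathbb{P})$ is complete, the partial sums converge, which simultaneously proves that the series $\sum_i u_i X_i$ is well-defined in $\mathbb{L}^2(\mathbb{P})$ and, by passing to the limit, that $\left\|\Phi(u)\right\|^2_{\mathbb{L}^2(\mathbb{P})} = \left\|u\right\|_\Gamma^2$ and $\langle \Phi u, \Phi v\rangle = \langle u,v\rangle_\Gamma$ for all $u,v \in H$. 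Injectivity then follows from norm preservation. For surjectivity, note that $\Phi(H)$, being the image of a complete space under a linear isometry, is a closed subspace of $\mathbb{L}^2(\mathbb{P})$; it contains each $X_i = \Phi(e_i)$ (with $e_i$ the $i$-th canonical sequence), hence contains $\overline{\operatorname{Span}}((X_i)_{i\in\mathbb{Z}})$, while conversely every $\Phi(u)$ lies in this closed span. Therefore $\Phi(H) = \mathbb{L}^2(\mathbb{P})$.

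I expect the only genuinely delicate point to be the well-definedness of the infinite series $\sum_i u_i X_i$ for an arbitrary $u \in H$: unlike the case of finite sums, its $\mathbb{L}^2(\mathbb{P})$-convergence is not automatic and has to be extracted precisely from the isometry identity on truncations combined with completeness. Here the two-sided bound $m < f^\star < m'$ plays a double role, the upper bound guaranteeing finiteness of $\left\|u\right\|_\Gamma = (u^T\Gamma u)^{1/2}$ and the lower bound guaranteeing the norm equivalence on which density rests; the underlying algebraic identity itself is routine.
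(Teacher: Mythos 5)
Your proof is correct. The paper gives no proof of this proposition---it is cited as a standard fact from Brockwell and Davis---and your argument (the inner-product identity $\mathbb{E}[\Phi(u)\Phi(v)]=\langle u,v\rangle_\Gamma$ on finitely supported sequences, extension by density and completeness using the equivalence of $\left\|\cdot\right\|_\Gamma$ with the $l_2$ norm, and surjectivity from the closedness of an isometric image containing each $X_i=\Phi(e_i)$) is precisely the classical construction the reference supplies, correctly isolating the one non-trivial point, namely the $\mathbb{L}^2(\mathbb{P})$-convergence of the series $\sum_i u_iX_i$.
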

%%%FIXME : remarque sur L^2(f^\star) 
The isometry will enable us to consider, in the proofs, alternatively sequences $u \in H$ or the corresponding random variables $Y  \in \mathbb{L}_2(\mathbb{P})$. \vskip .1in
% FIXME : est-ce que ca sert vraiment je ne crois pas ?
%Notice here that this function maps $X_i,i \in \mathbb{Z}$ onto the sequence $\delta_i$ defined by $$(\delta_i)_j = \ind{i=j}.$$
We will use the following notations:  recall that $\Gamma$ is the covariance operator and denote, for any $A,B \subset \mathbb{Z}$,
the corresponding minor $(A,B)$ by 
$$\Gamma_{AB}:= \left(\Gamma_{ij}\right)_{i \in A, j \in B}.$$
 Note that, when $A$ and $B$ are finite, $\Gamma_{AB}$ is the covariance matrix between $(X_i)_{i \in A}$  and $(X_j)_{j \in B}$. Diagonal minors will be simply written $\Gamma_{A}: = \Gamma_{AA}$, for any $A \in \mathbb{Z}$.\\
\indent In our  prediction framework,  let $O \subset \mathbb{Z}$ and assume  that we observe the process $\mathbf{X}$ at times $i \in O $. It is well known that the best
linear prediction of a random variable $Y$ by observed variables 
$(X_i)_{i \in O}$ is also the best prediction, defined by $P_O(Y):=\mathbb{E}\left[ Y | (X_i)_{i \in O} \right]$.
 Using the isometry, there exist unique $u \in H$ and $v \in H$ with $Y=\Phi(u)$ and  $P_O(Y)=\Phi(v)$. 
Hence, we can define a projection operator acting on $H$, by setting $p_O(u):=v$. This corresponds to the natural projection in $H$ onto the set $\{ \Phi^{-1}(X_i),\: i \in O \}$.
Note that this projection operator may be written by block
$$p_O u := \left[\begin{matrix}
\Gamma_{O}^{-1}\Gamma_{O\mathbb{Z}} 
\\ 0            
           \end{matrix}\right]
u.$$
The operator $\Gamma_{O}^{-1}$ is well defined since $f^\star \geq m >0$. % and so the process is not degenerate (the covariance operator is invertible). 
Finally, the best prediction observing $(X_i)_{i\in O}$ is  $$\mathbb{E}\left[ Y=\Phi(u) | (X_i)_{i \in O} \right]=P_O(\Phi(u)) = \Phi(p_O u).$$

This provides an expression of the projection when the covariance $\Gamma$ is known. Actually, in many practical situations, $\Gamma$ is unknown and need to be estimated from the observations. Recall that we observe $X_{-N},\dots,X_{-1}$. We will estimate the covariance with this sample and use a subset of these observations for the prediction. This last subset will be $\left\{(X_i)_{i \in O_{K(N)}} \right\} $, with $O_{K(N)}:=\left[-K(N), \cdots -1\right]$. Here $\left(K(N)\right)_{N \in \mathbb{N}}$ is a growing suitable sequence.
Hence, the predictor $\hat{Y}$ will be here
$$\hat{Y} = \hat{P}_{O_{K(N)}} Y,$$
where $\hat{P}_{O_{K(N)}}$ denotes some estimator of the projection operator onto $O_{K(N)}$, built with the full sample $(X_i)_{i = -N, \cdots, -1}$.

As usual, we estimate the accuracy of the prediction by the quadratic error
$${\rm MSE}(\hat{Y})=\mathbb{E}\left[ \left(\hat{Y} - Y\right)^2 \right].$$

The bias-variance decomposition gives 
$$\mathbb{E}\left[ \left(\hat{Y} - Y\right)^2 \right] = 
\mathbb{E}\big[ \left(\hat{P}_{O_{K(N)}} Y - P_{O_{K(N)}} Y \right)^2 \big]
   +  
\mathbb{E}\big[ \left(P_{O_{K(N)}}Y- P_{\mathbb{Z}^-} Y  \right)^2 \big]
+\mathbb{E}\big[ \left(P_{\mathbb{Z}^-} Y - Y \right)^2\big]    ,$$ 
where 
$$\hat{P}_{O_{K(N)}} Y =\hat{Y},$$
$$P_{O_{K(N)}} Y =\mathbb{E}\big[Y| (X_i)_{i \in O_{K(N)}} \big],$$
and
$$\hat{P}_{\mathbb{Z}^-} Y =\mathbb{E}\big[Y| (X_i)_{i <0} \big].$$
This error can be divided into three terms
\begin{itemize}
\item The last term $\mathbb{E}\big[ \left(P_{\mathbb{Z}^-} Y - Y \right)^2\big]$ is the prediction with infinite past error. It is induced by the variance of the unknown future values, and may be easily computed using the covariance operator. This variance does not go to zero as $N$ 
tends to infinity. It can be seen as an additional term that does not depend on the estimation procedure and thus will be omitted in the error term.
\item The second term $\mathbb{E}\big[ \left(P_{O_{K(N)}}Y- P_{\mathbb{Z}^-} Y  \right)^2 \big]$ is a bias induced by the temporal threshold on the projector.
\item  The first term $\mathbb{E}\big[ \left(\hat{P}_{O_{K(N)}} Y - P_{O_{K(N)}} Y \right)^2 \big]$ is a variance, due to the fluctuations of the estimation, and decreases to zero as soon as the estimator is consistent. Note that to compute this error, we have to handle the dependency between the prediction operator and the variable $Y$ we aim to predict.
\end{itemize}
Finally, the natural risk is obtained by removing the prediction with infinite past error: 
\begin{align*}R(\hat{Y} = \hat{P}_{O_{K(N)}} Y ) &: =\mathbb{E}\big[ \left(\hat{P}_{O_{K(N)}} Y - P_{O_{K(N)}} Y \right)^2 \big]
   +  
\mathbb{E}\big[ \left(P_{O_{K(N)}}Y- P_{\mathbb{Z}^-_*} Y  \right)^2 \big]
\\ & =\mathbb{E}\left[ \left( \hat{Y}-\mathbb{E}\left[Y|(X_i)_{i <0}\right] \right)^2 \right].\end{align*}
%Remark that alternatively  $R(\hat{Y})=\mathbb{E}\left[ \left( \hat{Y}-\mathbb{E}\left[Y|(X_i)_{i <0}\right] \right)^2 \right].$ 

The global risk will be computed by taking the supremum of $R(\hat{Y})$ among of all random variables $Y$ in a suitable set (growing with $N$). This set
will be defined in the next section.
% FIXME remarque sur un seul échantillon !!!!!! 
\section{Construction of the empirical projection operator}
\label{s:frame}

% In this section, we chose a increasing sequence $(K(N))_{N \in \mathbb{N}}, 1<K(N)<N$, which has to be understood as a subset of observed variables we will use for the prediction,
% and we build an estimator of the projection operator $p_{[-K(N),-1]}$, using all available observed variables.

% 
% Previous theorem provides a general expression of the projection which depends on the unknown spectral density of the process. 
% Hence, we need to estimate it to build an estimator of the projection operator $p_\infty$.  

Recall that the expression of the empirical unbiased covariance estimator is given by (see for example \cite{AzDa})
\begin{equation*}
\forall ~0<p< N,~ \hat{r}^{(N)}(p) = \frac{1}{N-p}\sum_{k = -N}^{-p-1}X_k X_{k+p} .
\end{equation*}

Notice that, when $p$ is close to $N$, the estimation is hampered since we only sum $N-p$ terms. Hence, we will not use the complete available data but 
rather use a cut-off. 

% We will estimate the covariance with this sample and use a subset for prediction, which will be denoted by  $O_{K(N)}:=\left[-K(N), \cdots -1\right]$, for some well chosen sequence $K(N)$.

% That is the reason why we estimate only the $2K(N)$ first coefficients of the covariance. And we will project on a part $[X_{-K(N)}, \cdots, X_{-1}]$ of the observed past $[X_{-N}, \cdots, X_{-1}]$, for a great window $\left[-K(N),-1\right]$. %(Actually, we estimate the $2K(N)$ coefficients for practical reasons).

Recall that $O_{K(N)} := [-K(N),-1]$ denotes the indices of the subset used for the prediction step.
We define the empirical spectral density as
\begin{equation} \label{specdens}
\hat{f}_K^{(N)} (t)= \sum_{p = -K(N)}^{K(N)} \hat{r}^{(N)}(p) e^{ipt}.
\end{equation}

We now build an estimator for $p_{O_{K(N)}}$ (see Section \ref{s:notations} for the definition of $p_{O_{K(N)}}$).%, which is the projection in $H = \left(l^2(\mathbb{Z}),\langle.,.\rangle_\Gamma\right)$ onto sequences
% with support in $O_{K(N)}$. 

% To approximate the projection onto the observed variables $X_{-1}, \cdots, X_{-K(N)}$, we will consider an  
% estimator $\hat{p}^N_{O_{K(N)}}$, defined for any random variable in $H^X_{\left[1,K(N)\right]}$, such that, when $K(N)$ is large, this estimate converges to $p_\infty$ the projector of $H^X_{\mathbb{Z}^+}$ onto $H^X_{\mathbb{Z}^-}$. Moreover, we want to use a single sample to build the estimator and forecast the future at the same time.
%  
First, we divide the index space $\mathbb{Z}$ into $M_K \cup O_K \cup B_K \cup F_K$ where :
\begin{itemize}
\item $M_K = \left\{\cdots, -K-2, -K-1\right\}$ denotes the index of the  past data that will not be used for the prediction (missing data)
\item $O_K = -K, \cdots, -1$ the index of the data used for the prediction (observed data)
\item $B_K = 0, \cdots, K -1$ the index of the data we currently want to forecast (blind data)
\item $F_K = K, K+1, \cdots$ the remaining index (future data)
\end{itemize}
In the following, we omit the dependency on $N$ to alleviate the notations.

% In this framework, Theorem~\ref{thmschur} shows that the projection operator of $H_{B_K}^X$ onto $H_{O_K}^X$ has the following expression
As discussed in Section \ref{s:notations}, the projection operator $p_{O_K}$ may be written by blocks as:

\begin{equation*}
p_{O_K} = \left[ \begin{matrix} (\Gamma_{O_K})^{-1} \Gamma_{O_K \mathbb{Z}} \\ 0  \end{matrix}\right].
\end{equation*}
Since, we will apply this operator only to sequences with support in $B_K$, we may consider

$$\forall u \in l^2(\mathbb{Z}), \operatorname{Supp}(u) \subset B_K, 
p_{O_KB_K}u :=\left[ \begin{matrix} (\Gamma_{O_K})^{-1} \Gamma_{O_K B_K} & 0 \\ 0& 0  \end{matrix} \right]u. $$
The last expression is given using the following block decomposition, if $B_K^C$ denotes the complement of $B_K$ in $\mathbb{Z}$ :
$$\left[ \begin{matrix}O_K B_K & O_K B_K^C \\ O_K^C B_K & O_K^C B_K^C \end{matrix} \right].$$

Hence, the two quantities $\Gamma_{O_K B_K}$ and $(\Gamma_{O_K})^{-1}$ have to be estimated. On the one hand, a natural estimator of the first matrix is given by $\hat{\Gamma}_{O_K B_K}$ defined as
\begin{equation*}
\left( \hat{\Gamma}_{O_KB_K}^{(N)} \right)_{ij}  =  \hat{r}^{(N)}(\left|j-i\right|), i \in O_K, j \in B_K.
\end{equation*}

On the other hand, a natural way to estimate $(\Gamma_{O_K})^{-1}$ 
could be to use $(\hat{\Gamma}^{(N)}_{O_K})$ (defined as $\left( \hat{\Gamma}_{O_K}^{(N)} \right)_{ij}  =  \hat{r}^{(N)}(\left|j-i\right|), i,j \in O_K $) and invert it.
However, it is not sure that this matrix is invertible. So, we will consider an empirical regularized version by setting
\begin{equation*}
\tilde{\Gamma}^{(N)} =  \hat{\Gamma}^{(N)}_{O_K} + \hat{\alpha}I_{O_K},
\end{equation*}
for a well chosen $\hat{\alpha}$. 

Set  
\begin{equation*} \hat{\alpha} = -\min{\hat{f}_K^{(N)}} \ind{\min{\hat{f}_K^{(N)}} \leq 0}+\frac{m}{4}\ind{\min{\hat{f}_K^{(N)}} \leq \frac{m}{4}}. \end{equation*}
so that $\left\|(\tilde{\Gamma}^{(N)}_{O_K})^{-1}\right\|_{2,op} \leq \frac{m}{4}$.
Remark that  $\tilde{\Gamma}^{(N)}$ is the Toeplitz matrix associated to the function $\tilde{f}^{(N)} =\hat{f}_K^{(N)} + \hat{\alpha}$, 
that has been tailored to ensure that $\tilde{f}^{(N)}$ is always greater than $\frac{m}{4}$, yielding the desired 
control to compute $\tilde{\Gamma}^{-1}$. 
Other regularization schemes could have been investigated. 
Nevertheless, note that adding a translation factor makes computation easier than using, for instance, a threshold on  
$\hat{f}^{(N)}_K$. Indeed, with our perturbation, we only modify the diagonal coefficients of the covariance matrix. 
\vskip .1in

Finally, we will consider the following estimator, for any $Y\in \mathcal{B}_K:=\operatorname{Span}\left( (X_i)_{i \in B_K }\right)$:
$$\hat{Y}:=\hat{P}_{O_KB_K}^{(N)}(Y) = \Phi\left(\hat{p}_{O_KB_K}^{(N)}\Phi^{-1}(Y)\right),$$
 
where the estimator $\hat{p}_{O_KB_K}^{(N)}$ of $p_{\mathbb{Z}^-B_K}$, with window $K(N)$, is defined as follows 
\begin{equation} \label{est}
\hat{p}_{O_KB_K}^{(N)} = \left(\tilde{\Gamma}^{(N)}_{O_K}\right)^{-1}\hat{\Gamma}^{(N)}_{O_KB_K}.
\end{equation}

% As a matter of fact, we define $\hat{P}_{O_K}^{(N)}$ by
% $$\forall u \in \mathbb{Z},\hat{P}_{O_K}^{(N)}(\Phi(u)) = \Phi\left(\hat{p}_{O_K}^{(N)}u\right),$$

\section{Asymptotic behavior of the empirical projection operator} \label{s:rate}

%The rate of convergence will be given using the mean square error defined by
%\begin{equation*}
%\left\|\hat{p}^N_\star - p_\infty \right\| = \sup_{Y \in H_\infty ^X, \left\|Y\right\|_{H_\infty ^X} = 1}\sqrt{\mathbb{E}\left[ \left(\hat{p}^N_\star Y- p_\infty Y\right)^2\right]}
%\end{equation*}
%\\

In this section, we give the rate of convergence of the estimator built previously (see Section \ref{s:frame}). We will bound uniformly the bias of prediction error for 
random variables in the close future.

First, let us give some conditions on the sequence $(K(N))_{N \in \mathbb{N}})$:
\begin{hyp}\label{a:k}
The sequence $(K(N))_{N \in \mathbb{N}}$ satisfies
\begin{itemize}
 \item $\lim K(N) \xrightarrow{N \rightarrow \infty} +\infty. $
\item  $\lim \frac{K(N)\log(K(N))}{N} \xrightarrow{N \rightarrow \infty}0 .$
\end{itemize}
\end{hyp}

Recall that the pointwise risk in $Y \in \mathbb{L}^2(\mathbb{P})$ is defined by
$$R(\hat{Y})=\mathbb{E}\left[ \left( \hat{Y}-\mathbb{E}\left[Y|(X_i)_{i <0}\right] \right)^2 \right].$$

The global risk for the window $K(N)$ is defined by taking the supremum of the pointwise risk over all random variables $Y\in \mathcal{B}_K=\operatorname{Span}\left( (X_i)_{i \in B_K }\right)$

$$\mathcal{R}_{K(N)}\left(\hat{P}^N_{O_{K}B_K}\right)= \sup_{Y \in \mathcal{B}_K,\atop  \operatorname{Var}(Y)\leq 1} R(\hat{P}^N_{O_{K}}(Y)) . $$

Notice that we could have chosen to evaluate the prediction quality only on $X_0$. Nevertheless the rate of convergence is not modified if we evaluate the prediction quality for all random variables from the close future. Indeed, the 
major part of the observations will be used for the estimation, and the conditional expectation is taken only on the most $K(N)$ recent observations. 
Our result will be then quite stronger than if we had dealt only with prediction of $X_0$.
% 
% Note also that, whenever $Q$ is a deterministic Hilbertian operator on $\mathbb{L}^2(\mathbb{P})$, we have
% $$\sqrt{\mathcal{R}_{K(N)}\left(Q \right)} = \left\|\left[ \begin{matrix} Q_{\mathbb{Z}B_K} & 0_{B_K^C} \end{matrix}\right] \right\|_{\Gamma}, $$
% since the supremum is taken over sequences with support in $B_K$.
% 
% 
% Indeed, the property $\Phi(u) \in \operatorname{Span}((X_i)_{i\in B_{K(N)}})$ means that $\text{Supp}(u) \subset B_K$.
% This remark will be used for the computation of the bias.

To get a control on the bias of the prediction, we need some regularity assumption.
We consider Sobolev's type regularity by setting
\begin{equation*}
\forall s >1, W_s := \left\{g \in \mathbb{L}_2([0,2\pi)), g(t) = \sum_{k \in \mathbb{Z}} a_ke^{ikt}, \sum_{k \in \mathbb{Z}}k^{2s} a_k^2  < \infty \right\}.
\end{equation*}
and define
\begin{equation*}
\forall g \in W_s,  g(t) = \sum_{k \in \mathbb{Z}} a_ke^{ikt} \left\|g\right\|_{W_s}: =  \inf \left\{M, \sum_{k \in \mathbb{Z}}k^{2s}  a_k^2 \leq M \right\}.
\end{equation*}

\begin{hyp} \label{a:sobol}
There exists $s \geq 1$ such that $f^\star \in W_s$.
\end{hyp}

We can now state our results. The following lemmas may be used in other frameworks than the blind problem. More precisely, if the blind prediction problem is very specific, 
the control of the loss between prediction with finite and infinite past is more classical, and the following lemmas may be applied for that kind of questions.
The case where independent samples are available may also be tackled with the last estimators, using rates of convergences given in operator norms.

The bias is given by the following lemma
\begin{lem} \label{l:bias}
For $N$ large enough, the following upper bound holds,%, for $N$ large enough,
\begin{equation*}
\left\| p_{O_KB_k} - p_{\mathbb{Z}^-B_K}  \right\|_{\Gamma} \leq C_2\frac{1}{K(N)^{\frac{2s-1}{2}}}, 
\end{equation*}
where $C_2 = \left\|\frac{1}{f^\star}\right\|_{W_{2s}} m'(1+\frac{m'}{m})$.
\end{lem}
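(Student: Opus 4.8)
The plan is to turn the warped operator norm into a variance computation through the isometry $\Phi$, and then to control that variance by the decay of the infinite-past prediction filter. Since both $p_{O_KB_K}$ and $p_{\mathbb{Z}^-B_K}$ only read the $B_K$-block of their argument, the supremum defining $\|\cdot\|_\Gamma$ is attained on sequences supported in $B_K$, and for such $u=\Phi^{-1}(Y)$ one has $\|u\|_\Gamma^2=\operatorname{Var}(Y)$. Hence $\|p_{O_KB_K}-p_{\mathbb{Z}^-B_K}\|_\Gamma=\sup\{\|P_{O_K}Y-P_{\mathbb{Z}^-}Y\|_{\mathbb{L}^2(\mathbb{P})}:Y\in\mathcal{B}_K,\ \operatorname{Var}(Y)\le 1\}$. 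Because $O_K\subset\mathbb{Z}^-$, the finite-past predictor factors through the infinite one, $P_{O_K}=P_{O_K}P_{\mathbb{Z}^-}$, so $P_{\mathbb{Z}^-}Y-P_{O_K}Y=(\operatorname{Id}-P_{O_K})P_{\mathbb{Z}^-}Y$ is the residual of re-projecting the infinite-past predictor onto the retained observations. Since $(\operatorname{Id}-P_{O_K})Z=0$ for every $Z$ in the span of $(X_i)_{i\in O_K}$, the best-approximation property yields $\|P_{\mathbb{Z}^-}Y-P_{O_K}Y\|\le\|P_{\mathbb{Z}^-}Y-Z\|$ for \emph{any} such $Z$, and I am free to pick $Z$ conveniently.

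I would then make the canonical choice. Writing $P_{\mathbb{Z}^-}Y=\sum_{i<0}c_iX_i$, the coefficients solve the normal equations $(\Gamma_{\mathbb{Z}^-}c)_l=(\Gamma_{\mathbb{Z}^-B_K}u)_l$ for $l<0$, and I take $Z$ to be the truncation $\sum_{i\in O_K}c_iX_i$. The error is then exactly the far-past tail $\sum_{i\in M_K}c_iX_i$, whose variance is $c_{M_K}^{T}\Gamma_{M_K}c_{M_K}\le m'\sum_{i<-K}c_i^2$ by $\|\Gamma\|_{2,op}\le m'$. Everything reduces to the tail estimate $\sum_{i<-K}c_i^2$. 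Here the regularity assumption enters: the coefficients $c$ are produced from $u$ by applying the inverse Toeplitz operator, whose off-diagonal entries decay like the Fourier coefficients of $1/f^\star$; Assumption \ref{a:sobol} together with $f^\star\ge m$ places $1/f^\star$ in the relevant Sobolev class and gives these coefficients an $\ell^2$-tail beyond lag $K$ of order $K^{-s}$. Since $Y$ is supported on the $K$ indices of $B_K$ with $\operatorname{Var}(Y)\le 1$, which forces $\|u\|_2^2\le 1/m$ because $\Gamma_{B_K}\succeq m\operatorname{Id}$, the tail $\sum_{i<-K}c_i^2$ is the tail of a convolution of $u$ against this decaying kernel; bounding the at most $K$ lag-pairs contributing to each tail index inserts an extra factor $K$, so that $\sum_{i<-K}c_i^2$ is of order $K^{1-2s}$. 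Taking square roots produces the announced rate $K^{-(2s-1)/2}$, and the conversion between $\|\cdot\|_{2,op}$ and $\|\cdot\|_\Gamma$ recorded earlier (which contributes the $m'/m$ factors) accounts for the precise form of the constant $C_2=\|1/f^\star\|_{W_{2s}}\,m'(1+m'/m)$.

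The main obstacle is the tail estimate on $c$, and two points demand care. First, $(\Gamma_{\mathbb{Z}^-})^{-1}$ is the inverse of a one-sided (half-line) Toeplitz operator, which is \emph{not} merely the restriction of $T(1/f^\star)$; controlling the decay of its entries is where the smoothness of the symbol genuinely enters, via a spectral-factorization or Wiener–Hopf argument, or equivalently via the known off-diagonal decay of inverses of Toeplitz operators with Sobolev symbols bounded away from zero. Second, the dependence on the block width $K$ must be tracked honestly: it is precisely the summation of the $K$ columns arising from $B_K$ against the $\ell^2$-decaying kernel that degrades the naive single-variable rate $K^{-s}$ into $K^{-(2s-1)/2}$, so the uniformity over all $Y\in\mathcal{B}_K$ (rather than the single target $X_0$) is exactly what costs the half power. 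Once the convolution tail is bounded uniformly over $u$ with $\|u\|_2^2\le 1/m$, the remaining steps, namely the variance bound via $\|\Gamma\|_{2,op}\le m'$ and the passage back to the warped operator norm through $\Phi$, are routine.
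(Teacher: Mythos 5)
Your reduction is sound up to a point: passing to variances through $\Phi$, using $P_{O_K}=P_{O_K}P_{\mathbb{Z}^-}$ and the best-approximation property, truncating the infinite-past filter $P_{\mathbb{Z}^-}Y=\sum_{i<0}c_iX_i$ to $O_K$, and bounding the residual variance by $m'\sum_{i<-K}c_i^2$ are all correct steps. But the proof then stands or falls on the tail estimate for the prediction coefficients $c=(\Gamma_{\mathbb{Z}^-})^{-1}\Gamma_{\mathbb{Z}^-B_K}u$, and this is exactly the step you do not carry out. You correctly identify that $(\Gamma_{\mathbb{Z}^-})^{-1}$ is the inverse of a half-line Toeplitz operator, not a restriction of $T(1/f^\star)$, and then defer to an unspecified ``spectral-factorization or Wiener--Hopf argument'' or to ``known off-diagonal decay of inverses.'' That is the entire analytic content of the lemma; without it the announced rate $K^{-(2s-1)/2}$ and the constant $C_2$ are unsupported. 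Your heuristic for where the extra half power comes from (``at most $K$ lag-pairs contributing to each tail index'') is also not a proof and does not match how the exponent actually arises.

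The paper's proof is built precisely to avoid this obstacle. Proposition \ref{p:schur} rewrites the projector as $p_A=\bigl[\begin{smallmatrix}\operatorname{Id}_A & -\Lambda_{AA^C}\Lambda_{A^C}^{-1}\\ 0&0\end{smallmatrix}\bigr]$ with $\Lambda=\Gamma^{-1}=T(1/f^\star)$ the \emph{two-sided} inverse, whose entries are literally the Fourier coefficients of $1/f^\star$. Writing both projectors this way and using the identity $\Gamma_{O_K\mathbb{Z}^+}\Lambda_{\mathbb{Z}^+}+\Gamma_{O_K}\Lambda_{O_K\mathbb{Z}^+}=-\Gamma_{O_KM_K^-}\Lambda_{M_K^-\mathbb{Z}^+}$ (a block of $\Gamma\Lambda=\operatorname{Id}$), the whole difference collapses to bounded factors times the corner block $\Lambda_{M_K^-\mathbb{Z}^+}$, whose Hilbert--Schmidt norm is controlled by the double tail $\sum_{i\ge K}\sum_{j\ge i}p(j)^2\le \|1/f^\star\|_{W_s}\sum_{i\ge K}i^{-2s}$, giving $K^{-(2s-1)/2}$ directly from Assumption \ref{a:sobol}. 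Your own route could in fact be completed by the same device: for $u$ supported in $B_K\subset\mathbb{Z}^+$, Proposition \ref{p:schur} gives $c_{M_K}=-\Lambda_{M_K\mathbb{Z}^+}(\Lambda_{\mathbb{Z}^+})^{-1}u$, so $\|c_{M_K}\|_2\le\|\Lambda_{M_K\mathbb{Z}^+}\|_{HS}\,m'\,\|u\|_2$ and the same tail bound applies. As written, however, the decisive estimate is asserted rather than proved, so the argument has a genuine gap.
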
 

In the last lemma, we assume regularity in terms of Sobolev's classes. Nevertheless, the proof may be written with some other kind of regularity. 
The proof is given in appendix, and is essentially based on Proposition \ref{p:schur}. This last proposition provides the Schur block inversion of the projection operator.
% Here, we underline this idea because it is, we believe, one of the simplest way to get through this. 

The control for the variance is given in the following lemma:
\begin{lem}\label{l:var}
\begin{equation*}
\int_0^\infty \mathbb{P}\left(\left\| \hat{p}_{O_KB_K}^N - p_{O_KB_K} \right\|_{\Gamma}^4 > t \right) \dd t 
\leq C_0^4K(N)^4 (\frac{\log(K(N))}{N})^2 + o(K(N)^4 (\frac{\log(K(N))}{N})^2), 
\end{equation*}
where $C_0= 4m'(\frac{6m'}{m^2} +\frac{4}{m}+2)$ 
\end{lem}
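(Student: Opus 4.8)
The goal is to control the fourth moment of the warped-norm error $\|\hat{p}_{O_KB_K}^N - p_{O_KB_K}\|_\Gamma$, expressed through the tail integral $\int_0^\infty \mathbb{P}(\|\hat{p}_{O_KB_K}^N - p_{O_KB_K}\|_\Gamma^4 > t)\,\dd t = \mathbb{E}\big[\|\hat{p}_{O_KB_K}^N - p_{O_KB_K}\|_\Gamma^4\big]$. So really I must bound the expected fourth power of the estimation error in operator norm. Recalling the definitions $\hat{p}_{O_KB_K}^N = (\tilde{\Gamma}^{(N)}_{O_K})^{-1}\hat{\Gamma}^{(N)}_{O_KB_K}$ and $p_{O_KB_K} = (\Gamma_{O_K})^{-1}\Gamma_{O_KB_K}$, the first step is an algebraic decomposition of the difference of two products. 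I would write
\begin{equation*}
\hat{p} - p = (\tilde{\Gamma}_{O_K})^{-1}\hat{\Gamma}_{O_KB_K} - (\Gamma_{O_K})^{-1}\Gamma_{O_KB_K} = (\tilde{\Gamma}_{O_K})^{-1}\big(\hat{\Gamma}_{O_KB_K} - \Gamma_{O_KB_K}\big) + \big((\tilde{\Gamma}_{O_K})^{-1} - (\Gamma_{O_K})^{-1}\big)\Gamma_{O_KB_K},
\end{equation*}
and then use the resolvent identity $(\tilde{\Gamma}_{O_K})^{-1} - (\Gamma_{O_K})^{-1} = (\tilde{\Gamma}_{O_K})^{-1}(\Gamma_{O_K} - \tilde{\Gamma}_{O_K})(\Gamma_{O_K})^{-1}$ to re-express the second summand in terms of the matrix estimation error $\tilde{\Gamma}_{O_K} - \Gamma_{O_K}$. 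Because the warped norm is equivalent to the operator norm (with constants $m'/m$ as recalled in Section \ref{s:notations}), and because the regularization guarantees $\|(\tilde{\Gamma}_{O_K})^{-1}\|_{2,op} \leq 4/m$ and $\|(\Gamma_{O_K})^{-1}\|_{2,op} \leq 1/m$ while $\|\Gamma_{O_KB_K}\|_{2,op} \leq m'$, submultiplicativity reduces everything to controlling $\|\hat{\Gamma}_{O_KB_K} - \Gamma_{O_KB_K}\|_{2,op}$ and $\|\tilde{\Gamma}_{O_K} - \Gamma_{O_K}\|_{2,op}$. The explicit constant $C_0 = 4m'(6m'/m^2 + 4/m + 2)$ should emerge from collecting these factors, the $6m'/m^2$ term coming from the product of three resolvent-type bounds.

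The heart of the proof is then a concentration estimate for the empirical covariance error in operator norm. I would first split $\tilde{\Gamma}_{O_K} - \Gamma_{O_K}$ into the genuine fluctuation $\hat{\Gamma}_{O_K} - \Gamma_{O_K}$ plus the deterministic regularization shift $\hat{\alpha}I_{O_K}$; the latter is controlled because $\hat{\alpha}$ is bounded by $\min \hat{f}_K^{(N)}$, which itself concentrates around $\min f^\star \geq m$, so $\hat{\alpha}$ is small with high probability. For the fluctuation term, since these are Toeplitz (or near-Toeplitz) matrices with entries $\hat{r}^{(N)}(p) - r_p$, the operator norm is bounded by the sup norm of the associated trigonometric polynomial, i.e. $\|\hat{\Gamma}_{O_K} - \Gamma_{O_K}\|_{2,op} \lesssim \sup_t |\hat{f}_K^{(N)}(t) - \mathbb{E}\hat{f}_K^{(N)}(t)| + \text{bias}$, and similarly for the cross-block. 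This turns the matrix concentration into a uniform deviation bound for a sum of $K(N)$ terms $\hat{r}^{(N)}(p) - r_p$, each a quadratic form in the Gaussian vector $(X_{-N},\dots,X_{-1})$. I would apply a Gaussian quadratic-form concentration inequality (Hanson–Wright type) to each $\hat{r}^{(N)}(p)$, which gives sub-exponential tails with variance proxy of order $1/(N-p) \sim 1/N$, and then take a union bound over the $O(K(N))$ frequencies/lags, producing the characteristic $\log(K(N))/N$ rate. Raising to the fourth power and integrating the resulting tails yields the $K(N)^4(\log K(N)/N)^2$ order.

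The main obstacle I anticipate is twofold. First, obtaining a sharp uniform control of $\sup_p |\hat{r}^{(N)}(p) - r_p|$ with the correct $\sqrt{\log(K)/N}$ rate requires a careful Gaussian concentration argument for the quadratic forms $\hat{r}^{(N)}(p)$, honestly tracking the dependence of the variance proxy on $p$ and on the spectral bound $m'$ (this is where most of the constant $C_0$ is determined); handling the short-range dependence via $\sum_k r_k^2 < \infty$ and Assumption \ref{a:fbornee} is essential here. Second, and more delicate for a clean statement, is passing from the pointwise lag-estimate to the operator-norm bound on the full $K(N) \times K(N)$ block: one must verify that the operator norm of the Toeplitz error block is genuinely controlled by the trigonometric sup-norm rather than incurring an extra factor, and absorb any residual bias from the cut-off at $K(N)$ lags into the $o(\cdot)$ remainder term. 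I would therefore isolate the concentration of $\sup_p|\hat{r}^{(N)}(p)-r_p|$ as a standalone sub-lemma and devote the bulk of the technical work to it, treating the algebraic decomposition and the norm-equivalence bookkeeping as routine consequences of the structure already set up in Sections \ref{s:notations} and \ref{s:frame}.
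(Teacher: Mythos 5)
Your plan follows essentially the same route as the paper's proof: the same product/resolvent decomposition of $\hat{p}-p$ with the bounds $\|(\tilde{\Gamma}_{O_K})^{-1}\|_{2,op}\le 4/m$, $\|(\Gamma_{O_K})^{-1}\|_{2,op}\le 1/m$, $\|\Gamma_{O_KB_K}\|_{2,op}\le m'$, the same reduction to a uniform deviation bound on $\sup_{p}|\hat{r}^{(N)}(p)-r(p)|$ via Gaussian quadratic-form concentration (the paper invokes Comte's inequality, which is the Hanson--Wright-type bound you describe) with a union bound over lags, the same high-probability control of $\hat{\alpha}$ through the concentration of $\min\hat f_K^{(N)}$ around $\min f^\star\ge m$, and the same final integration of the sub-exponential tails of $\|A\|_{2,op}^4$. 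The only cosmetic difference is that you pass from lag estimates to the operator norm via the sup-norm of the error symbol rather than the paper's direct $K(N)\sup_p|\hat r^{(N)}(p)-r(p)|$ bound, which yields the same order.
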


Again, we choose this concentration formulation to deal with the dependency of the blind prediction problem, but this result gives immediately a control of the variance of 
the estimator whenever independent samples are observed (one for the estimation, and another one for the prediction).

The proof of this lemma is given in Section \ref{s:proof_lemma}. It is based on a concentration inequality of the estimators
$\hat{r}^{(N)}_p$ (see Comte \cite{comte}).  

Integrating this rate of convergence over the blind data, we get our main theorem.
%  The following proposition describes the behavior of the bias and variance of the estimated projector.

\begin{thm}\label{t:main}
Under Assumptions \ref{a:fbornee}, \ref{a:k} and \ref{a:sobol}, for $N$ large enough, the empirical estimator satisfies
\begin{equation*}
\sqrt{\mathcal{R} (\hat{P}^{(N)}_{O_KB_K} ) }\leq  C_1\frac{K(N)^2\sqrt{\log(K(N))}}{\sqrt{N}} +C_2\frac{1}{K(N)^{\frac{2s-1}{2}}}, 
\end{equation*}
where $C_1$ and $C_2$ are given in Appendix.
\end{thm}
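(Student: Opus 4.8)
The plan is to combine the two preceding lemmas through the bias-variance decomposition that was set up in Section~\ref{s:notations}. Recall that the global risk $\mathcal{R}_{K(N)}$ is, by definition, the supremum over $Y\in\mathcal{B}_K$ with $\operatorname{Var}(Y)\leq 1$ of the pointwise risk $R(\hat{Y})$, and that this pointwise risk already has the infinite-past prediction error removed. Using the isometry $\Phi$, each such $Y$ corresponds to some $u\in H$ supported on $B_K$ with $\|u\|_\Gamma\leq 1$, and the risk becomes an operator-norm statement: the quantity $\sqrt{R(\hat{Y})}$ for the worst such $Y$ is controlled by $\|\hat{p}_{O_KB_K}^{(N)}-p_{\mathbb{Z}^-B_K}\|_\Gamma$. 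The first step is therefore to make this translation explicit and to split
\begin{equation*}
\|\hat{p}_{O_KB_K}^{(N)}-p_{\mathbb{Z}^-B_K}\|_\Gamma \leq \|\hat{p}_{O_KB_K}^{(N)}-p_{O_KB_K}\|_\Gamma + \|p_{O_KB_K}-p_{\mathbb{Z}^-B_K}\|_\Gamma,
\end{equation*}
so that the two terms correspond precisely to the variance and bias halves of the decomposition.

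Next I would control each term by the corresponding lemma. The bias term $\|p_{O_KB_K}-p_{\mathbb{Z}^-B_K}\|_\Gamma$ is bounded directly by Lemma~\ref{l:bias}, giving the deterministic contribution $C_2 K(N)^{-(2s-1)/2}$, which is already the second summand in the theorem. For the variance term the subtlety is that $\hat{p}_{O_KB_K}^{(N)}$ is random and, moreover, statistically dependent on the variable $Y$ we wish to predict, so I cannot simply pass a deterministic operator norm through the expectation. This is why Lemma~\ref{l:var} is phrased as a tail bound on $\|\hat{p}_{O_KB_K}^{(N)}-p_{O_KB_K}\|_\Gamma^4$: integrating the tail via $\mathbb{E}[Z]=\int_0^\infty\mathbb{P}(Z>t)\,\dd t$ yields a bound on $\mathbb{E}\big[\|\hat{p}_{O_KB_K}^{(N)}-p_{O_KB_K}\|_\Gamma^4\big]$, and hence by Jensen (or Cauchy--Schwarz) on the second moment $\mathbb{E}\big[\|\hat{p}_{O_KB_K}^{(N)}-p_{O_KB_K}\|_\Gamma^2\big]$. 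The point of controlling a fourth power rather than a second is to absorb the dependence between the estimated projector and $Y$: after bounding the squared prediction error uniformly in $Y$ by the squared operator norm, one still needs an expectation of that squared norm, and the fourth-moment tail bound is what delivers it with the correct rate.

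Taking square roots of the rate in Lemma~\ref{l:var} gives a variance contribution of order $\sqrt{C_0^4 K(N)^4(\log K(N)/N)^2}=C_0^2 K(N)^2\log(K(N))/N$, which one checks matches the announced term $C_1 K(N)^2\sqrt{\log K(N)}/\sqrt{N}$ up to the explicit constant $C_1$ and the negligible $o(\cdot)$ remainder absorbed into the ``$N$ large enough'' clause. The final step is then simply the triangle inequality $\sqrt{a+b}\leq\sqrt{a}+\sqrt{b}$ applied to the squared risk, adding the variance and bias bounds to produce the stated sum. The main obstacle, as flagged above, is the uniformity over $Y\in\mathcal{B}_K$ together with the dependence between estimator and target; everything else is bookkeeping of constants. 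I would handle the uniformity by reducing the supremum over $Y$ to the single operator-norm quantity $\|\hat{p}_{O_KB_K}^{(N)}-p_{\mathbb{Z}^-B_K}\|_\Gamma$ \emph{before} taking the expectation, so that the randomness enters only through that one scalar and Lemma~\ref{l:var} applies verbatim.
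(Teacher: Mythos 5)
Your overall architecture coincides with the paper's: split $\sqrt{\mathcal{R}}$ by the triangle inequality into a variance term and a bias term, bound the bias by Lemma~\ref{l:bias}, and feed the tail bound of Lemma~\ref{l:var} into the variance term via $\mathbb{E}[Z]=\int_0^\infty\mathbb{P}(Z>t)\,\dd t$. However, there is a genuine gap in the one step that carries the whole difficulty of the theorem. You propose to ``reduce the supremum over $Y$ to the single operator-norm quantity $\|\hat{p}_{O_KB_K}^{(N)}-p_{O_KB_K}\|_\Gamma$ \emph{before} taking the expectation, so that the randomness enters only through that one scalar.'' This reduction is not available. Pathwise, the prediction error is
\begin{equation*}
\big(\hat{P}^{(N)}_{O_KB_K}Y-P_{O_KB_K}Y\big)(\omega)=\sum_{i\in O_K}\sum_{j\in B_K}A_{ij}(\omega)\,u_j\,X_i(\omega),\qquad A=\hat{p}_{O_KB_K}^{(N)}-p_{O_KB_K},
\end{equation*}
and the warped norm $\|A(\omega)\|_\Gamma$ is defined through an expectation over the process, so it does not dominate this pathwise quantity; nor can you condition on $A$ and integrate out $X$, precisely because $A$ is built from the same observations $X_{-N},\dots,X_{-1}$ whose values $X_i(\omega)$, $i\in O_K$, appear in the sum. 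The inequality $\sup_Y\mathbb{E}[(\hat{P}Y-PY)^2]\leq\mathbb{E}[\|A\|_\Gamma^2]$ that your plan implicitly relies on is exactly the step that holds only for an estimator independent of the data to be predicted.

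The paper's proof resolves this differently, and the resolution is where the fourth moment of Lemma~\ref{l:var} is actually consumed: two pathwise applications of Cauchy--Schwarz give $\big(\sum_{i,j}A_{ij}u_jX_i\big)^2\leq\big(\sum_{i,j}A_{ij}^2\big)\big(\sum_j u_j^2\big)\big(\sum_{i\in O_K}X_i^2\big)$, the Frobenius norm is converted to the operator norm at the cost of a factor $K(N)$, and then a further Cauchy--Schwarz \emph{in $\omega$} decouples $\sqrt{\mathbb{E}\|A\|_{2,op}^4}$ from $\sqrt{\mathbb{E}[(\sum_{i\in O_K}X_i^2)^2]}\leq K(N)\sqrt{r_4}$. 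Your rate bookkeeping signals the same omission: $\sqrt{C_0^4K(N)^4(\log K(N)/N)^2}=C_0^2K(N)^2\log K(N)/N$ does not ``match'' $C_1K(N)^2\sqrt{\log K(N)}/\sqrt{N}$ up to constants --- the two extra powers of $K(N)$ (one from the Frobenius-to-operator bound, one from the fourth moment of $\sum X_i^2$) and the final square root of the whole product are exactly what turn the former expression into the latter. Without the pathwise decoupling argument, the variance half of the theorem is not established.
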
 

Again, the proof of this result is given in Section \ref{s:proof_main_thm}. It is quite technical. The main difficulty is induced by the blindness. Indeed, in this step, we have to deal with the dependency between the data and the empirical projector.
% has to be dealt with here.

Obviously, the best rate of convergence is obtained by balancing the variance and the bias and finding the best window $K(N)$. 
Indeed, the variance increases with $K(N)$ while the bias decreases.
%  with the window $K(N)$.  
Define $\hat{P}^{(N)}_\star$ as the projector $\hat{P}^{(N)}_{K^\star(N)}$ associated to the sequence $K^\star(N)$ that minimizes the bound in the last theorem. 
We get:
\begin{cor}[Rate of convergence of the prediction estimator]
\label{tmain}
Under Assumptions $1.1$ and $2.1$, for $N$ large enough and choosing $K(N) = \Big{\lfloor} (\frac{N}{\log N})^{\frac{1}{2(2s+3)}} \Big{\rfloor}$, we get
 \begin{equation} \label{rate}
\sqrt{\mathcal{R}(\hat{P}^{(N)}_\star)} \leq O\left( \left(\frac{\log N}{N}\right)^\frac{2s-1}{2(2s+3)}\right).
\end{equation} 
\end{cor}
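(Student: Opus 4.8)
The corollary is a straightforward optimization of the two-term bound from Theorem~\ref{t:main}, so the plan is to substitute the proposed window $K(N)$ into that bound and verify that both terms balance to the claimed order. I would begin from the conclusion of Theorem~\ref{t:main},
\begin{equation*}
\sqrt{\mathcal{R}(\hat{P}^{(N)}_{O_KB_K})} \leq C_1 \frac{K(N)^2 \sqrt{\log K(N)}}{\sqrt{N}} + C_2 \frac{1}{K(N)^{\frac{2s-1}{2}}},
\end{equation*}
and treat the right-hand side as a function of a continuous parameter $K$, ignoring the floor for the moment. The variance term grows like $K^2 \sqrt{\log K}/\sqrt{N}$ and the bias term decays like $K^{-(2s-1)/2}$, so the optimal window is found by equating the two exponents of $K$ (up to logarithmic factors).

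The key computation is to set $K^2/\sqrt{N} \asymp K^{-(2s-1)/2}$, which gives $K^{2 + (2s-1)/2} \asymp \sqrt{N}$, i.e. $K^{(2s+3)/2} \asymp \sqrt{N}$, hence $K \asymp N^{1/(2s+3)}$. Incorporating the logarithmic correction from the variance term replaces $N$ by $N/\log N$ and yields the prescribed choice $K^\star(N) = \lfloor (N/\log N)^{1/(2(2s+3))} \rfloor$. I would then plug this back into the bias term: since $K^\star \asymp (N/\log N)^{1/(2(2s+3))}$, we get
\begin{equation*}
\frac{1}{(K^\star)^{(2s-1)/2}} \asymp \left(\frac{\log N}{N}\right)^{\frac{2s-1}{2} \cdot \frac{1}{2(2s+3)}} = \left(\frac{\log N}{N}\right)^{\frac{2s-1}{2(2s+3)}},
\end{equation*}
which is exactly the claimed rate. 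The remaining task is to check that the variance term is of the same order (or smaller), so that the sum does not exceed a constant multiple of this bound; this amounts to verifying that the $\sqrt{\log K^\star}$ factor is absorbed into the $O(\cdot)$ and does not spoil the balance, which follows since $\log K^\star = \Theta(\log N)$ contributes only a further logarithmic factor that is dominated by the polynomial rate.

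The main technical care, rather than a genuine obstacle, lies in handling the floor function and the logarithmic factors rigorously. Taking $K^\star = \lfloor (N/\log N)^{1/(2(2s+3))} \rfloor$ means the actual window differs from the ideal continuous optimizer by at most one unit, and I would argue that for $N$ large enough this perturbation changes each term only by a bounded multiplicative constant, which is swallowed by the $O(\cdot)$ notation. Likewise I must confirm that Assumption~\ref{a:k} is satisfied by this choice: since $K^\star(N) \to \infty$ and $K^\star(N) \log K^\star(N)/N \to 0$ (the numerator being a small power of $N$ times a logarithm), the hypotheses of Theorem~\ref{t:main} remain valid, so the theorem applies and the substitution is legitimate. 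Collecting these observations gives the stated rate~\eqref{rate}.
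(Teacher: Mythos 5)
Your overall strategy is the intended one: the paper gives no separate proof of this corollary, it is simply the optimization of the two-term bound of Theorem~\ref{t:main}, together with the routine checks of Assumption~\ref{a:k} and of the floor function that you correctly list. However, your verification contains an arithmetic slip at the decisive step, and this slip conceals a genuine mismatch between the window stated in the corollary and the claimed rate. Your own balancing equation $K^{(2s+3)/2}\asymp \sqrt{N}$ gives $K\asymp N^{1/(2s+3)}$, and replacing $N$ by $N/\log N$ yields $K\asymp (N/\log N)^{1/(2s+3)}$ --- \emph{not} $(N/\log N)^{1/(2(2s+3))}$; the exponent gets halved in your text with no justification. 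Then, when you substitute the stated $K^\star\asymp(N/\log N)^{1/(2(2s+3))}$ into the bias term, you write
\begin{equation*}
\frac{1}{(K^\star)^{(2s-1)/2}}\asymp\left(\frac{\log N}{N}\right)^{\frac{2s-1}{2}\cdot\frac{1}{2(2s+3)}}=\left(\frac{\log N}{N}\right)^{\frac{2s-1}{2(2s+3)}},
\end{equation*}
but $\frac{2s-1}{2}\cdot\frac{1}{2(2s+3)}=\frac{2s-1}{4(2s+3)}$, not $\frac{2s-1}{2(2s+3)}$. With the window exactly as written in the corollary, the bias term is therefore of order $(\log N/N)^{(2s-1)/(4(2s+3))}$, which decays strictly more slowly than the claimed rate, while the variance term is of order $N^{-(2s+1)/(2(2s+3))}$ up to logarithms; the two terms do not balance and the bound you actually obtain is weaker than \eqref{rate}.

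The repair is to take $K(N)=\big\lfloor (N/\log N)^{1/(2s+3)}\big\rfloor$, which is what your balancing computation produces before the unexplained halving (the factor $2$ in the denominator of the exponent in the corollary's statement appears to be a typo). With this choice one checks directly that $C_2K^{-(2s-1)/2}\asymp(\log N/N)^{(2s-1)/(2(2s+3))}$ and that $C_1K^2\sqrt{\log K}/\sqrt{N}\asymp N^{2/(2s+3)-1/2}(\log N)^{1/2-2/(2s+3)}=(\log N/N)^{(2s-1)/(2(2s+3))}$ as well, so both terms equal the claimed rate; Assumption~\ref{a:k} still holds since $1/(2s+3)<1$, and the floor perturbs each term only by a bounded factor. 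Your proof should either adopt this corrected window or explicitly flag the discrepancy; as written, the chain of equalities does not establish \eqref{rate} for the stated $K(N)$.
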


Notice that, in real life issues, it would be more natural to balance the risk given in Theorem \ref{t:main}, with the macroscopic term of variance given by
$$\mathbb{E}\big[Y-\mathbb{E}\left[Y|(X_i)_{i<0}\right] \big].$$
% 
% The interest of $\hat{P}^N_\star$ is limited in real life issues, because we will in this case balance the risk given here with the macroscopic term of variance given by
% $$\mathbb{E}\big[Y-\mathbb{E}\left[Y|(X_i)_{i<0}\right] \big].$$
This leads to a much greater $K(N)$.
Nevertheless, Corollary \ref{tmain} has a theoretical interest. Indeed, it recovers the classical semi-parametric rate of convergence, and provides a way to get away from dependency. 
Notice that, the estimation rate increases with the regularity $s$ of the spectral density $f^\star$. 
More precisely, if $s\rightarrow \infty$, we obtain $(\frac{\log N}{N})^{\frac{1}{2}}$. This is, up to the $\log$-term, the optimal speed.
As a matter of fact, in this case, estimating the first coefficients of the covariance matrix is enough. Hence, the bias is very small. 
Proving a lower bound on the mean error (that could lead to a minimax result), is a difficult task, since the 
tools used to design the estimator are far from the usual estimation methods. \vskip .1in

% FIXME :Avantages de l'estimateur (Fabrice ...) et etre positif }

% However, there are two main advantages to use our methodology.
%  First, our blind prediction scheme provides a fully numerically tractable
% predictor that is easily computable. This predictor is optimal in the Gaussian case. 
% Note also that rate may be extended to non Gaussian series for linear prediction extending the exponential inequalities of Lemma \ref{lemconc}. 
% This will be done in a forthcoming  paper following the ideas developed in \cite{Lili10} and \cite{GuiliunIHP}. 
% Another extension will be considered in \cite{Graphos} in the context of Gaussian processes on general graphs.
%In this paper, we get a first result for time series forecasting with a single sample. Furthermore, this result tends to be optimal while the regularity of the spectral density increase. The algebraical tools developed for the study of the bias may be used in some more complicated cases. In particular, we aim at doing the same kind of work, with non successive observations. 

% \section{Time series prediction with finite past observations} \label{s:frame}
 \section{Projection onto finite observations with known covariance}\label{section_schur}

 We aim at providing an exact expression for the projection operator.  
For this, we generalize the expression given by Bondon (\cite{bondon}, \cite{bondon2}) for a projector onto infinite past. %Consider the following notations for extracted operators, namely %restrictions to $H_B^X$ and co-restrictions to $H_C^X$. 
%\begin{equation*}
%R\big{|}_{H_B^X}^{H_C^X} = R_{C,B}.
%\end{equation*}
%Consider also indifferently this operator and the one onto $H_\infty^X$ composed with a surjection on the left hand-side and a projection on the right hand-side. So we use equally
%\begin{equation*}
%R_{C,B} : H_C^X \rightarrow H_B^X \quad {\rm or} \quad R^*_{C,B} : H_\infty^X \rightarrow H_\infty^X, 
%\end{equation*}
%which, written blockwise has the following expression
%\begin{equation*}
%R^*_{C,B} = \left[\begin{matrix} R_{CB} &0 \\ 0 & 0\end{matrix}\right] 
%\end{equation*}
% For any linear operator $Q$ from $H^X_\infty$ to $H^X_\infty$, we denote by $Q_{CB}$ the operator define on $H^X_\infty$ as 
% $\operatorname{Proj}_{H_B^X} \circ R \circ \operatorname{Proj}_{H^X_C}$.  
% 
% Recall that the projector $p_A$ defined in the last section may be written 
% $$p_A X_i  = \Gamma_{\left\{i\right\}A}\left( \Gamma_{A} \right)^{-1} X_{A}$$
% 
% Notice then that, if $M$ denotes the complement of $A$ in $\mathbb{Z}$, we get
% $$p_A X_A  = \operatorname{Id}_A X_{A},$$
% $$p_A X_M  = \Gamma_{MA}\left( \Gamma_{A} \right)^{-1} X_{A},$$
% where $\operatorname{Id}_A$ stands for the identity.
Recall that, for any $A \subset \mathbb{Z}$, and if $A^C$ denotes the complement of $A$ in $\mathbb{Z}$, the projector $p_A$ may be written blockwise (see for instance \cite{Ste}) as:
$$p_A = \left[\begin{matrix} Id_A & \Gamma_A^{-1} \Gamma_{AA^C}  \\0 & 0 \end{matrix}\right] .$$ 
Denote also $\Lambda := \Gamma^{-1} = T(\frac{1}{f^\star})$ the inverse of the covariance operator, 
the following proposition provides an alternative expression of any projection operators.  
% Let $A \subset \mathbb{Z}$ and define the complement of $A$ in $\mathbb{Z}$ by $M$.

\begin{prop}\label{p:schur}
% For any $A \subset \mathbb{Z}$. 
% Recall that $R$ is the covariance operator of the process, and $P= R^{-1}$. 
% Then, writing the operators blockwise with blocks 
%  $A$,$M = A^C$, we have
% We can give two expressions blockwise, for $p_A$ the projector operator on $\left(\phi(X_i)\right)_{i \in A}$. 
One has
\begin{eqnarray*}
p_A 
% &=& \left[\begin{matrix} \operatorname{Id}_A & \Gamma_A^{-1} \Gamma_{AA^C} \\0 & 0 \end{matrix}\right] \\ 
 & = &
% =
=\left[\begin{matrix} \operatorname{Id}_A &  - \Lambda_{AA^C}\Lambda_{A^C}^{-1}  \\0 & 0 \end{matrix}\right] 
\end{eqnarray*}

Furthermore, the prediction error verifies $$\mathbb{E}\left[\left(P_A Y-Y\right)^2\right]=u^T\Lambda_{M}^{-1}u,$$
 where $Y = \Phi(u) = u^T X.$ 
%for any  $a \in \mathbb{R}^\mathbb{Z}$. 
% Hence the quadratic error is given by the quadratic form $\Lambda_M^{-1}$. 
\end{prop}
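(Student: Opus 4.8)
The plan is to prove the two assertions separately, both by reading the relation $\Gamma\Lambda=\operatorname{Id}$ blockwise along the partition $\mathbb{Z}=A\sqcup A^C$, where I write $M:=A^C$ for the unobserved indices and assume (for the error formula) that $Y=\Phi(u)$ is built from the unobserved variables, i.e.\ $\operatorname{Supp}(u)\subset M$.

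For the alternative expression I would start from the known block form $p_A=\left[\begin{matrix}\operatorname{Id}_A & \Gamma_A^{-1}\Gamma_{AA^C}\\ 0 & 0\end{matrix}\right]$ and rewrite its off-diagonal block in terms of $\Lambda$. Writing $\Gamma$ and $\Lambda=\Gamma^{-1}$ in $2\times2$ block form and equating the $(A,A^C)$ block of $\Gamma\Lambda$ with $0$ gives $\Gamma_A\Lambda_{AA^C}+\Gamma_{AA^C}\Lambda_{A^C}=0$. Since $\Gamma_A$ is boundedly invertible, this rearranges to $\Lambda_{AA^C}=-\Gamma_A^{-1}\Gamma_{AA^C}\Lambda_{A^C}$, and multiplying on the right by $\Lambda_{A^C}^{-1}$ yields exactly $\Gamma_A^{-1}\Gamma_{AA^C}=-\Lambda_{AA^C}\Lambda_{A^C}^{-1}$. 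Substituting into the block form delivers the claimed expression for $p_A$.

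For the error formula I would exploit the isometry $\Phi$ together with the fact that $p_A$ is the orthogonal projection in $H$ onto the observed coordinates, hence self-adjoint and idempotent for $\langle\cdot,\cdot\rangle_\Gamma$. Since the residual is $Y-P_AY=\Phi\big((\operatorname{Id}-p_A)u\big)$, the isometry gives
$$\mathbb{E}\left[(P_AY-Y)^2\right]=\left\|(\operatorname{Id}-p_A)u\right\|_\Gamma^2=\langle(\operatorname{Id}-p_A)u,u\rangle_\Gamma=\big((\operatorname{Id}-p_A)u\big)^T\Gamma u,$$
using idempotence and self-adjointness of $\operatorname{Id}-p_A$. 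Computing $(\operatorname{Id}-p_A)u=\left[\begin{matrix}-\Gamma_A^{-1}\Gamma_{AM}u_M\\ u_M\end{matrix}\right]$ and expanding the quadratic form collapses everything to the Schur complement $u_M^T\big(\Gamma_M-\Gamma_{MA}\Gamma_A^{-1}\Gamma_{AM}\big)u_M$, after which it remains to identify this Schur complement with $\Lambda_M^{-1}$, giving $u^T\Lambda_M^{-1}u$.

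The main obstacle is precisely this last identification, together with the rigorous justification of the block manipulations in the infinite-dimensional setting on $l^2(\mathbb{Z})$: I must argue that $\Gamma_A$ and $\Lambda_M$ are boundedly invertible and that the identities extracted from $\Gamma\Lambda=\Lambda\Gamma=\operatorname{Id}$ remain valid for these Hilbertian operators. The invertibility follows from Assumption \ref{a:fbornee}, since $\Gamma$ and $\Lambda=T(1/f^\star)$ both have spectra bounded away from zero and a diagonal compression preserves the lower bound (for $u$ supported on $A$, $u^T\Gamma_Au=u^T\Gamma u\geq m\|u\|_2^2$). The Schur identity $\Gamma_M-\Gamma_{MA}\Gamma_A^{-1}\Gamma_{AM}=\Lambda_M^{-1}$ is then obtained by solving the four block equations of $\Gamma\Lambda=\operatorname{Id}$ for $\Lambda_M$, exactly as in the finite-dimensional Schur complement formula but carried out for bounded operators, which is legitimate once the required inverses are shown to exist and be bounded.
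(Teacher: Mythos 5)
Your proposal is correct, and it reaches both conclusions by a somewhat more economical route than the paper. For the block formula, the paper first proves a full Schur block-inversion lemma (verifying that the $2\times 2$ block candidate inverse of $\Lambda$ multiplies to the identity, deducing $\Gamma_A^{-1}=\Lambda_A-\Lambda_{AM}\Lambda_M^{-1}\Lambda_{MA}$) and then substitutes this into $\Gamma_A^{-1}\Gamma_{AM}$, using $\Lambda\Gamma=\operatorname{Id}$ twice more to collapse the product to $-\Lambda_{AM}\Lambda_M^{-1}$; you instead extract the single off-diagonal block equation $\Gamma_A\Lambda_{AM}+\Gamma_{AM}\Lambda_M=0$ and rearrange, which gives the identity in one line and avoids the Schur lemma entirely for this half of the statement. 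For the error formula, the paper quotes the Gaussian conditional-variance expression $Q=\Gamma_M-\Gamma_{MA}\Gamma_A^{-1}\Gamma_{AM}$ from Stein and then applies its Schur lemma with the roles of $\Gamma$ and $\Lambda$ exchanged, whereas you derive that quadratic form from first principles via the isometry $\Phi$ and the self-adjointness and idempotence of $p_A$ in $H$, and then identify the Schur complement with $\Lambda_M^{-1}$ by solving the block equations of $\Gamma\Lambda=\Lambda\Gamma=\operatorname{Id}$ --- which is exactly the content of the paper's Lemma~\ref{l:schur} in transposed form, so this last step is not really avoidable, only re-derived. You also correctly flag and resolve the operator-theoretic points the paper treats implicitly (bounded invertibility of $\Gamma_A$ and $\Lambda_M$ as compressions of operators with spectrum bounded away from zero under Assumption~\ref{a:fbornee}). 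Your restriction to $\operatorname{Supp}(u)\subset M$ in the error formula is harmless and in fact the natural reading of the statement, since the component of $u$ supported on $A$ contributes nothing to the residual.
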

\noindent The proof of this proposition is given in Appendix. We point out that this proposition is helpful for the computation of the bias.
 Indeed, it gives a way to calculate the norm of the difference between two inverses operators.  

\section{Appendix} \label{s:append}

\subsection{Proof of Proposition~\ref{p:schur}}
\begin{proof}
For the proof of Proposition \ref{p:schur}, let us choose 
$$A \subset \mathbb{Z},$$
and denote the complement of $A$ in $\mathbb{Z}$ by
$$M := A^C $$

First of all, $\Lambda = \Gamma^{-1}$ is a Toeplitz operator over $H$ with eigenvalues in $[\frac{1}{m'} ;\frac{1}{m}]$. $\Lambda_M$ 
may be inverted as a principal minor of $\Lambda$.
Let us define the Schur complement of $\Lambda$ on sequences with support in $M$ : $S = \Lambda_A - \Lambda_{AM}\Lambda_{M}^{-1}\Lambda_{MA}$. 
The next lemma provides an expression of $\Gamma_{A}^{-1}$ (see for instance \cite{schurbook}).
\begin{lem}\label{l:schur}
\begin{eqnarray*}
\Gamma_{A}^{-1} &= &S \\ & = &\Lambda_A - \Lambda_{AM}\Lambda_{M}^{-1}\Lambda_{MA}.
\end{eqnarray*}
\end{lem}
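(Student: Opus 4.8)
The plan is to prove the identity $\Gamma_A^{-1}=\Lambda_A-\Lambda_{AM}\Lambda_M^{-1}\Lambda_{MA}$ by checking directly that the right-hand side $S$ is a two-sided inverse of $\Gamma_A$. To keep the block algebra rigorous in the infinite-dimensional setting, I would work with the orthogonal projections $P_A$ and $P_M$ of $l^2(\mathbb{Z})$ onto the coordinate subspaces indexed by $A$ and $M=A^C$, so that $P_A+P_M=\operatorname{Id}$ and, for every bounded operator $Q$, the minors are recovered as $Q_A=P_AQP_A$, $Q_{AM}=P_AQP_M$, $Q_{MA}=P_MQP_A$, and $Q_M=P_MQP_M$.

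First I would check that the operators entering $S$ are well defined. Since $\Gamma=T(f^\star)$ has spectrum in $[m,m']$ by Assumption \ref{a:fbornee}, its inverse $\Lambda=T(1/f^\star)$ has spectrum in $[\frac{1}{m'},\frac{1}{m}]$. Hence, for any $u$ supported in $M$, $\langle u,\Lambda_M u\rangle=\langle u,\Lambda u\rangle\ge \frac{1}{m'}\|u\|^2$, so the self-adjoint operator $\Lambda_M$ is bounded below and thus invertible; the same spectral argument (using $f^\star\ge m$) shows $\Gamma_A$ is invertible. This gives meaning to $\Lambda_M^{-1}$ and to the claimed inverse $\Gamma_A^{-1}$.

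Next I would extract the block relations from the operator identity $\Lambda\Gamma=\operatorname{Id}$. Sandwiching between projections and inserting $P_A+P_M=\operatorname{Id}$ between $\Lambda$ and $\Gamma$ gives
\begin{equation*}
\Lambda_A\Gamma_A+\Lambda_{AM}\Gamma_{MA}=\operatorname{Id}_A,\qquad \Lambda_{MA}\Gamma_A+\Lambda_M\Gamma_{MA}=0 .
\end{equation*}
The second relation yields $\Lambda_M^{-1}\Lambda_{MA}\Gamma_A=-\Gamma_{MA}$, whence
\begin{equation*}
S\Gamma_A=\Lambda_A\Gamma_A-\Lambda_{AM}\Lambda_M^{-1}\Lambda_{MA}\Gamma_A=\Lambda_A\Gamma_A+\Lambda_{AM}\Gamma_{MA}=\operatorname{Id}_A,
\end{equation*}
using the first relation. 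Symmetrically, from $\Gamma\Lambda=\operatorname{Id}$ I would obtain $\Gamma_A\Lambda_A+\Gamma_{AM}\Lambda_{MA}=\operatorname{Id}_A$ and $\Gamma_A\Lambda_{AM}+\Gamma_{AM}\Lambda_M=0$, so that $\Gamma_A\Lambda_{AM}\Lambda_M^{-1}=-\Gamma_{AM}$ and the analogous computation gives $\Gamma_A S=\operatorname{Id}_A$. Since $\Gamma_A$ is invertible, this two-sided identity forces $S=\Gamma_A^{-1}$.

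The only genuinely delicate point is the operator-theoretic bookkeeping: one must ensure the block-product expansions are legitimate and not merely formal, which is exactly why I would phrase every step through the projections $P_A,P_M$ and the insertion of the resolution of the identity, so that each manipulation is an equality of bounded operators on $l^2(\mathbb{Z})$. The invertibility of $\Lambda_M$ is the other fact that must be secured before writing $\Lambda_M^{-1}$, and it follows cleanly from the uniform spectral bound on $\Lambda$. Once these two facts are in place, the algebraic core is the classical Schur-complement cancellation and is routine.
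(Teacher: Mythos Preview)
Your proof is correct. Both you and the paper prove the Schur complement identity, but the executions differ. The paper writes down the full block inverse of $\Lambda$ in Schur form, namely the $2\times 2$ block matrix with $(A,A)$-entry $S^{-1}$, verifies by direct multiplication that this block matrix times $\Lambda$ equals $\operatorname{Id}$, then transposes, identifies the result with $\Gamma=\Lambda^{-1}$, and reads off $\Gamma_A=S^{-1}$. Your route is more targeted: you extract only the $(A,A)$ and $(M,A)$ block relations from $\Lambda\Gamma=\operatorname{Id}$ (and symmetrically from $\Gamma\Lambda=\operatorname{Id}$) and substitute to get $S\Gamma_A=\Gamma_A S=\operatorname{Id}_A$ directly. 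Your version has two modest advantages: it never needs to assume a priori that $S$ is invertible (the paper writes $S^{-1}$ from the outset), and it avoids computing the remaining three blocks of $\Gamma$, which are irrelevant for the lemma as stated. The paper's version, on the other hand, yields the full block decomposition of $\Gamma$ as a by-product, which could be useful elsewhere. You are also more explicit than the paper about the operator-theoretic bookkeeping (the projections $P_A,P_M$ and the spectral lower bound ensuring $\Lambda_M$ is invertible), which is appropriate in the infinite-dimensional setting.
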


\begin{proof} of Lemma \ref{l:schur}

% Here, we recall briefly the proof of this Lemma, since it is very short.
One can check 
% that, written blockwise,
\begin{eqnarray*}
& &\left[\begin{matrix}\Lambda_A &\Lambda_{AM} \\ \Lambda_{MA} & \Lambda_M \end{matrix} \right] \left[\begin{matrix}S^{-1}&- S^{-1}\Lambda_{AM}\Lambda_M^{-1} \\ - \Lambda_M^{-1}\Lambda_{MA}S^{-1} & \Lambda_M^{-1} + \Lambda_M^{-1}\Lambda_{MA}S^{-1}\Lambda_{AM}\Lambda_M^{-1} \end{matrix}\right] \\
& = & \left[\begin{matrix}\Lambda_AS^{-1} -\Lambda_{AM}\Lambda_M^{-1}\Lambda_{MA}S^{-1} & -\Lambda_AS^{-1}\Lambda_{AM}\Lambda_M^{-1} + \Lambda_{AM}(\Lambda_M^{-1} + \Lambda_M^{-1}\Lambda_{MA}S^{-1}\Lambda_{AM}\Lambda_M^{-1}) \\ \Lambda_{MA}S^{-1} - \Lambda_M\Lambda_M^{-1}\Lambda_{MA}S^{-1} & - \Lambda_{MA}S^{-1}\Lambda_{AM}\Lambda_M^{-1} + \Lambda_M (\Lambda_M^{-1} + \Lambda_M^{-1}\Lambda_{MA}S^{-1}\Lambda_{AM}\Lambda_M^{-1} )  \end{matrix}\right]
\\ & = & \left[\begin{matrix}SS^{-1} & (\Lambda_{AM} \Lambda_M^{-1}\Lambda_{MA}S^{-1}+I_A-\Lambda_AS^{-1})\Lambda_{AM}\Lambda_M^{-1} \\ \Lambda_{MA}S^{-1}- \Lambda_{MA}S^{-1} & - \Lambda_{MA}  S^{-1}\Lambda_{AM}\Lambda_M^{-1}  + I_M +\Lambda_{MA}S^{-1}\Lambda_{AM}\Lambda_M^{-1}   \end{matrix}\right]
\\ &=&\left[\begin{matrix}I_A& 0 \\ 0 & I_M   \end{matrix}\right].
\end{eqnarray*}

Since the matrix are symmetric, we can transpose the last equality. We obtain that
\begin{eqnarray*}
\left[\begin{matrix}S^{-1}&- S^{-1}\Lambda_{AM}\Lambda_M^{-1} \\ - \Lambda_M^{-1}\Lambda_{MA}S^{-1} & \Lambda_M^{-1} + \Lambda_M^{-1}\Lambda_{MA}S^{-1}\Lambda_{AM}\Lambda_M^{-1} \end{matrix}\right] & = & \Lambda^{-1}
\\ &= &\Gamma.
\end{eqnarray*}
So that $\Gamma_A = S^{-1} $.
\end{proof}
We now compute the projection operator:
\begin{eqnarray*}
p_A &= & %\left[\begin{matrix}\Gamma_A^{-1} & 0 \\ 0 & 0 \end{matrix}\right]  \Gamma \\
\left[\begin{matrix} Id_A & \Gamma_A^{-1}\Gamma_{AM}  \\0 & 0 \end{matrix}\right] \\
&  = & \left[\begin{matrix} Id_A & S \Gamma_{AM}  \\0 & 0 \end{matrix}\right]\\
&  = & \left[\begin{matrix} Id_A &  (\Lambda_A - \Lambda_{AM}\Lambda_{M}^{-1}\Lambda_{MA})\Gamma_{AM}  \\0 & 0 \end{matrix}\right]\\
& = & \left[\begin{matrix} Id_A &  \Lambda_A \Gamma_{AM} - \Lambda_{AM}\Lambda_{M}^{-1}(Id_M - \Lambda_M \Gamma_M)  \\0 & 0 \end{matrix}\right]\\
& = & \left[\begin{matrix} Id_A &  \Lambda_A \Gamma_{AM} - \Lambda_{AM}\Lambda_{M}^{-1} + \Lambda_{AM} \Gamma_M  \\0 & 0 \end{matrix}\right]\\
& = & \left[\begin{matrix} Id_A &  - \Lambda_{AM}\Lambda_{M}^{-1}  \\0 & 0 \end{matrix}\right].
\end{eqnarray*}
Where we have used $\Lambda  \Gamma = Id$ in the last two lines.

Now consider $Q$ the quadratic error operator. It is defined as
$$\forall u \in l^2(\mathbb{Z}), u^TQu := \left\|(p_Au-u)^2 \right\|_{\Gamma} = \mathbb{E}\left[(\Phi(u)- P_A \Phi(u) )^2\right].   $$  
% ($Q : X \mapsto X^TQX$ the quadratic error).

% \begin{eqnarray*}
% Q &:= & p_{A}^T \Gamma p_{A} \\
%  & = & \Lambda_M^{-1}
% \end{eqnarray*}
This operator $Q$ can be obtained by a direct computation (writing the product right above), but it is easier to use the expression of the variance of a projector in the Gaussian case given for instance by \cite{Ste}.
\begin{equation*}
Q = \Gamma_M-\Gamma_{MA}\Gamma_{A}^{-1} \Gamma_{AM}
\end{equation*}
Again, notice that $Q$ is the Schur complement of $\Gamma$ on sequences with support in $A$, and thanks to Lemma~\ref{l:schur} applied to $\Lambda$ instead of $\Gamma$,
 we get
\begin{equation*}
Q = \Lambda_M^{-1}.
\end{equation*}
This ends the proof of Proposition \ref{p:schur}.
\end{proof}

\subsection{Proof of Theorem \ref{t:main}}
\label{s:proof_main_thm}

\begin{proof} of Theorem \ref{t:main}

Recall that we aim at providing a bound on  $\sqrt{\mathcal{R}(\hat{P}^{(N)}_{O_KB_K})}$.

Notice first that we have
$$\sqrt{\mathcal{R}(\hat{P}^{(N)}_{O_KB_K})} \leq  \sqrt{\sup_{Y \in \mathcal{B}_K,\atop  \operatorname{Var}(Y)\leq 1}
\mathbb{E}\left[ (\hat{P}_{O_KB_K}^{(N)}(Y) - P_{O_KB_K}(Y))^2\right] } + \sqrt{\mathcal{R}(P_{O_KB_K}))}  . $$

Using Lemma \ref{l:bias} for a sequence $(K(N))_{N \in \mathbb{N}}$ and a centered random variable $Y \in \operatorname{Span}\left((X_i)_{i \in B_K}\right)$ such that $\mathbb{E}\left[Y^2\right] = 1$, we have 
\begin{eqnarray*}
\sqrt{\mathcal{R}(P_{O_KB_K}))} & \leq  & \left\| p_{O_KB_K} - p_{\mathbb{Z}^-B_K}  \right\|_{\Gamma} \sqrt{\mathbb{E}\left[Y^2\right] }
\\ & \leq & C_2\frac{1}{K(N)^{\frac{2s-1}{2}}} .
\end{eqnarray*}

For the variance, we first notice that $Y = \Phi(u) = u^T\mathbf{X}$,
\begin{equation*} 1 = \mathbb{E}\left[Y^2\right]  = u^T\Gamma_{B_K}u \geq m u^Tu = m\sum_{i=0}^{K(N)-1}u_i^2,\end{equation*}

Denote $A=\hat{p}_{O_KB_K}^{(N)} - p_{O_KB_K}$. We can write, by applying twice Cauchy-Schwarz's inequality,
\begin{eqnarray*}
\mathbb{E}\left[\Big(\hat{P}_{O_KB_K}^{(N)}Y - P_{O_KB_K}Y\Big)^2\right] &  = & \int_\omega \Big(  \sum_{ i = -K(N)}^{-1} \sum_{j = 0}^{K(N)-1} A_{ij}(\omega)u_jX_i(\omega)\Big)^2\dd \mathbb{P}(\omega)
\\ & \leq & \int_\omega \sum_{i = -K(N)}^{-1} (\sum_{j = 0}^{K(N)-1} A_{ij}(\omega)u_j)^2 \sum_{i = -K(N)}^{-1} X_i^2(\omega)\dd \mathbb{P}(\omega)
\\ & \leq & \int_\omega \sum_{ i = -K(N)}^{-1} \sum_{j = 0}^{K(N)-1}  A_{ij}^2(\omega) \sum_{j = 0}^{K(N)-1} u_j^2 \sum_{i = -K(N)}^{-1} X_i^2(\omega)\dd 
\mathbb{P}(\omega),
\end{eqnarray*}
So that,
\begin{eqnarray*}
 \mathbb{E}\left[\Big(\hat{P}_{O_KB_K}^{(N)}Y - P_{O_KB_K}Y\Big)^2\right]  & \leq & \int_\omega   \sum_{ i = -K(N)}^{-1}\sum_{j = 0}^{K(N)-1} A_{ij}^2(\omega)  \frac{1}{m} \sum_{i = n_0+1}^{K(N)+n_0} X_i^2\dd \mathbb{P}(\omega).
\end{eqnarray*}
Using the following equivalence between two norms for finite matrices with size $(n,m)$ (see for instance \cite{matrixhandbook}),

\begin{equation*}
\sqrt{\sum_{i = 1}^{n} \sum_{ j = 1}^{m} A_{ij}^2} \leq  \sqrt{n} \left\| A \right\|_{2,op},
\end{equation*}
we obtain

\begin{eqnarray*}
\mathbb{E}\left[\Big(\hat{P}_{O_KB_K}^{(N)}Y - P_{O_KB_K}Y\Big)^2\right]  & \leq & \frac{K(N)}{m} \int_\omega  \left\| A (\omega)\right\|_{2,op}^2  \sum_{i = n_0+1}^{K(N)+n_0} X_i^2(\omega)\dd \mathbb{P}(\omega).
\end{eqnarray*}

% 
% Using the equivalence between two norms for finite matrices with size $(n,m)$ (see for instance \cite{matrixhandbook}), we obtain
% 
% \begin{equation*}
% \sum_{i = 1}^{n} \sum_{ j = 1}^{m} A_{ij}^2 \leq  \sqrt{n} \left\| A \right\|_{2,op}.
% \end{equation*}
Further, 
% using Proposition \ref{p:}, we get 

\begin{eqnarray*}
\mathbb{E}\left[\Big(\hat{P}_{O_KB_K}^{(N)}Y - P_{O_KB_K}Y\Big)^2\right] 
 & \leq & \frac{K(N)}{m} \int_\omega  \left\| A(\omega) \right\|_{2,op}^2  \sum_{j = n_0+1}^{K(N)+n_0} X_j^2(\omega)\dd \mathbb{P}(\omega) 
\\ & \leq & \frac{K(N)}{m} \sqrt{\int_\omega  \left\| A (\omega)\right\|_{2,op}^4\dd \mathbb{P}(\omega)} \sqrt{\int_\omega\left(\sum_{j = n_0+1}^{K(N)+n_0} X_j^2(\omega)\right)^2\dd \mathbb{P}(\omega)}
\\ & \leq & \frac{K(N)}{m} \sqrt{\int_{\mathbb{R}^+} \mathbb{P}\left( \left\| A \right\|_{2,op}^4>t \right)\dd t} \sqrt{K(N)^2\int_\omega\left( X_j^4\right)\dd \mathbb{P}(\omega)},
\end{eqnarray*}
We have used here again Cauchy-Schwarz's inequality and the fact that, for all nonnegative random variable $Y$,  
\begin{equation*}
\mathbb{E}\left[Y\right] = \int_{\mathbb{R}^+} \mathbb{P}\left( Y>t\right) \dd t.
\end{equation*}

Since $X_0$ is Gaussian, its moment of order four $r_4$ is finite. Then Lemma \ref{l:var} yields that, for $N$ large enough,

\begin{equation*}
\mathbb{E}\left[\Big(\hat{P}_{O_KB_K}^{(N)}Y - P_{O_KB_K}Y\Big)^2\right]  \leq \frac{C_0^2\sqrt{r_4}K(N)^4\log(K(N)))}{mN}.
\end{equation*}
So that,
\begin{equation*}
\sqrt{\sup_{Y \in \mathcal{B}_K,\atop  \operatorname{Var}(Y)\leq 1}
\mathbb{E}\left[ (\hat{P}_{O_K}^{(N)}(Y) - P_{O_K}(Y))^2\right] } \leq \frac{C_1K(N)^2\sqrt{\log(K(N))}}{\sqrt{N}},
\end{equation*}
with $C_1 = \frac{C_0\sqrt[4]{r_4}}{\sqrt{m}}$. This ends the proof of the theorem.
\end{proof}
%\section{Appendix}
%We prove here the concentration and regularity results given by the next lemmas, and some technicals lemmas as intermediate steps of the proof.

\subsection{Proofs of concentration and regularity lemmas }
\label{s:proof_lemma}

First, we compute the bias and prove Lemma~\ref{l:bias} :
%\begin{lem} 
%It holds that, for $N$ large enough,
%\begin{equation*}
%\right\|_{2,op} p_{K(N)}^N - p_\infty  \right\|_{2,op}_2 \leq C_2\frac{1}{K(N)^{2s'-2}},
%\end{equation*}
%where $C_2 = \left\|\frac{1}{f^\star}\right\|_{H_{s'}} m'(1+\frac{1}{m^2}$
%\end{lem} 
\begin{proof}{of Lemma \ref{l:bias}}

Recall that we aim to obtain a bound on $\left\| p_{O_KB_K} - p_{\mathbb{Z}^-B_K} \right\|_{\Gamma}$.
Using Proposition~\ref{p:schur}, we can write 
% \begin{equation*} 
% p_{\mathbb{Z}^-}\big{|}_{B_K} = p_\infty \big{|}_{_{\mathbb{Z}^+}} = - \Lambda_{\mathbb{Z}^-^*\mathbb{Z}^+}(\Lambda_{\mathbb{Z}^+})^{-1}
% \end{equation*}
% But $p_{.}\big{|}_{_{B_K}} = p_{.}\big{|}_{_{\mathbb{Z}^+}} \big{|}_{_{B_K}}$
% So, 
\begin{eqnarray*}
\left\| p_{O_KB_K} - p_{\mathbb{Z}^-B_K} \right\|_{\Gamma} &\leq& \left\| p_{O_K\mathbb{Z}^+} - p_{\mathbb{Z}^-\mathbb{Z}^+} \right\|_{\Gamma} 
\\ & \leq & \left\| \left[\begin{matrix} (\Gamma_{O_K})^{-1}\Gamma_{O_K\mathbb{Z}^+} \\ 0 \end{matrix}\right] - \left[\begin{matrix} - \Lambda_{O_K\mathbb{Z}^+}(\Lambda_{\mathbb{Z}^+})^{-1} \\
 -\Lambda_{M_K^-\mathbb{Z}^+}(\Lambda_{\mathbb{Z}^+})^{-1} \end{matrix}\right] \right\|_{\Gamma}.
\end{eqnarray*}
So that, using the norms equivalence,
\begin{eqnarray*}
\left\| p_{O_KB_K} - p_{\mathbb{Z}^-B_K} \right\|_{\Gamma}& \leq &  \frac{m'}{m} \left\| \left[\begin{matrix} (\Gamma_{O_K})^{-1}\Gamma_{O_K\mathbb{Z}^+} \\ 0 \end{matrix}\right] - \left[\begin{matrix} - \Lambda_{O_K\mathbb{Z}^+}(\Lambda_{\mathbb{Z}^+})^{-1} \\
 -\Lambda_{M_K^-\mathbb{Z}^+}(\Lambda_{\mathbb{Z}^+})^{-1} \end{matrix}\right] \right\|_{2,op} \\
& \leq & \frac{m'}{m} \left\| \left[ \begin{matrix} (\Gamma_{O_K})^{-1}\Gamma_{O_K\mathbb{Z}^+} + \Lambda_{O_K\mathbb{Z}^+}(\Lambda_{\mathbb{Z}^+})^{-1} \\ \Lambda_{M_K^-\mathbb{Z}^+}(\Lambda_{\mathbb{Z}^+})^{-1} \end{matrix}\right]  \right\|_{2,op} \\
 & \leq &  \frac{m'}{m}\left\| \left[ \begin{matrix} (\Gamma_{O_K})^{-1}\Gamma_{O_K\mathbb{Z}^+}\Lambda_{\mathbb{Z}^+} + \Lambda_{O_K\mathbb{Z}^+}\\ \Lambda_{M_K^-\mathbb{Z}^+}\end{matrix}\right]  \right\|_{2,op} \left\|   (\Lambda_{\mathbb{Z}^+})^{-1} \right\|_{2,op}\\
 & \leq &  \frac{m'}{m}\left\| (\Lambda_{\mathbb{Z}^+})^{-1} \right\|_{2,op} \left( \left\|(\Gamma_{O_K})^{-1}\Gamma_{O_K\mathbb{Z}^+}\Lambda_{\mathbb{Z}^+}  + \Lambda_{O_K\mathbb{Z}^+} \right\|_{2,op}  + \left\| \Lambda_{M_K^-\mathbb{Z}^+} \right\|_{2,op} \right).
\end{eqnarray*}
%}
The last step follows from the inequality:
\begin{equation*}\left\| \begin{matrix}A \\ B \end{matrix} \right\|_{2,op} \leq \left\| \begin{matrix}A \\ 0 \end{matrix} \right\|_{2,op} + \left\|  \begin{matrix}0 \\ B \end{matrix}  \right\|_{2,op} = \left\| A  \right\|_{2,op} + \left\|  B   \right\|_{2,op}.
\end{equation*}
But, since $\Lambda= \Gamma^{-1}$, 
\begin{equation*}
\Gamma_{O_K\mathbb{Z}^+}\Lambda_{\mathbb{Z}^+} + \Gamma_{O_K}\Lambda_{O_K\mathbb{Z}^+} = - \Gamma_{O_KM_K^-}\Lambda_{M_K^-\mathbb{Z}^+}.
\end{equation*}
So, we obtain, %\footnotesize{
\small{\begin{eqnarray*}
\left\| p_{O_KB_K} - p_{\mathbb{Z}^-B_K} \right\|_{\Gamma}
& \leq & \frac{m'}{m}\left\| (\Lambda_{\mathbb{Z}^+})^{-1} \right\|_{2,op} \left( \left\|(\Gamma_{O_K})^{-1}\left(- \Gamma_{O_KM_K^-}\Lambda_{M_K^-\mathbb{Z}^+}\right) \right\|_{2,op}  + \left\| \Lambda_{M_K^-\mathbb{Z}^+} \right\|_{2,op} \right)\\
& \leq & \frac{m'}{m} \left\| (\Lambda_{\mathbb{Z}^+})^{-1} \right\|_{2,op}  \left( \left\|(\Gamma_{O_K})^{-1} \right\|_{2,op}  \left\| - \Gamma_{O_KM_K^-}\Lambda_{M_K^-\mathbb{Z}^+} \right\|_{2,op} + \left\| \Lambda_{M_K^-\mathbb{Z}^+} \right\|_{2,op} \right)\\
& \leq &  \frac{m'}{m}\left\| (\Lambda_{\mathbb{Z}^+})^{-1} \right\|_{2,op} \left( \left\|(\Gamma_{O_K})^{-1} \right\|_{2,op} \left\| \Gamma_{O_K M_K^-} \right\|_{2,op} \left\| \Lambda_{M_K^-\mathbb{Z}^+} \right\|_{2,op} + \left\| \Lambda_{M_K^-\mathbb{Z}^+} \right\|_{2,op} \right)  \\
& \leq &  \frac{m'}{m} \left\| (\Lambda_{\mathbb{Z}^+})^{-1} \right\|_{2,op} \left(  \left\|(\Gamma_{O_K})^{-1} \right\|_{2,op} \left\| \Gamma_{O_K M_K^-} \right\|_{2,op} + 1 \right) \left\| \Lambda_{M_K^-\mathbb{Z}^+} \right\|_{2,op} .
\end{eqnarray*}}
%}
But, we have,

 \begin{equation*}\left\| (\Lambda_{\mathbb{Z}^+})^{-1} \right\|_{2,op} \leq m',   \end{equation*} as the inverse of a principal minor of $\Lambda$.

 \begin{equation*}\left\|(\Gamma_{O_K})^{-1} \right\|_{2,op} \leq \frac{1}{m},  \end{equation*} since it is the inverse of a principal minor of $\Gamma$.

 \begin{equation*} \left\| \Gamma_{O_KM_K^-} \right\|_{2,op} \leq m'  ,\end{equation*} as an extracted operator of $\Gamma$.

Thus, we get
\begin{equation*}
\left\| p_{O_KB_K} - p_{\mathbb{Z}^-B_K} \right\|_{\Gamma}  \leq C_4 \left\| \Lambda_{M_K^-\mathbb{Z}^+} \right\|_{2,op}, 
\end{equation*}

where $C_4 = \frac{m'^2}{m}(1+\frac{m'}{m})$.
Since $f^\star \in H_s$ (Assumption $2.1$), and $f^\star \geq m > 0$, we have also $\frac{1}{f^\star} \in H_s$. If we denote $p(k) = \Lambda_{i,i+k}$ the Fourier coefficient of $\frac{1}{f^\star}$, we get 
\begin{eqnarray*}
\left\| \Lambda_{M_K^-\mathbb{Z}^+} \right\|_{2,op}  & \leq & \left\|\Lambda_{M_K^-\mathbb{Z}^+} \right\|_2 \\
& \leq & \sqrt{\sum_{i \leq - K(N) ; 0 \leq j} p(j-i)^2}\\
& \leq & \sqrt{\sum_{i = K(N)}^\infty \sum_{j= i}^\infty p(j)^2} \\
& \leq & \sqrt{\sum_{i = K(N)}^\infty \frac{\left\|\frac{1}{f^\star}\right\|_{W_{s}}}{i^{2s}}}\\
& \leq &\sqrt{\left\|\frac{1}{f^\star}\right\|_{H_{s}}\frac{1}{K(N)^{s-1}}}.
\end{eqnarray*}
\end{proof}

So that the lemma is proved and the bias is given by

\begin{equation*}
\left\| p_{O_KB_K} - p_{\mathbb{Z}^-B_K} \right\|_{\Gamma} \leq C_4\sqrt{\left\|\frac{1}{f^\star}\right\|_{W_{s}}}\frac{1}{K(N)^{\frac{2s-1}{2}}}.
\end{equation*}
Actually, the rate of convergence for the bias is given by the regularity of the spectral density, since it depends on the coefficients far away from the principal diagonal.\vskip .1in

Now, we prove Lemma \ref{l:var}, which achieves the proof of the theorem.
%\begin{lem}
%\begin{equation*}
%\int_0^\infty \mathbb{P}\left(\right\|_{2,op} A \right\|_{2,op}_2^4 > t \right) \leq C_1^4K(N)^4 (\frac{\log(K(N))}{N})^2 + o(K(N)^4 (\frac{\log(K(N))}{N})^2), 
%\end{equation*}
%where $C_1= (\frac{16m'^2}{m^2} +\frac{16m'}{m})$
%\end{lem}

\begin{proof}{ of Lemma \ref{l:var}}

Recall that $A=\hat{p}_{O_KB_K}^{(N)} - p_{O_KB_K}$. We aim at proving that 

$$\int_0^\infty \mathbb{P}\left(\left\| A \right\|_{\Gamma}^4 > t \right) \dd t 
\leq C_0^4K(N)^4 (\frac{\log(K(N))}{N})^2 + o(K(N)^4 (\frac{\log(K(N))}{N})^2).$$

First, 
%\scriptsize{
\begin{eqnarray*}
\left\| A \right\|_{2,op} & = & \left\| (\tilde{\Gamma}_{O_K})^{-1}\hat{\Gamma}_{O_KB_K} - (\Gamma_{O_K})^{-1}\Gamma_{O_KB_K} \right\|_{2,op} 
\\ &\leq  & \left\| \Gamma_{O_KB_K}\right\|_{2,op} \left\| (\tilde{\Gamma}_{O_K})^{-1}-(\Gamma_{O_K})^{-1} \right\|_{2,op} + \left\|(\tilde{\Gamma}_{O_K})^{-1}\right\|_{2,op} \left\| \hat{\Gamma}_{O_KB_K} - \Gamma_{O_KB_K} \right\|_{2,op}
\\& \leq & \left\| \Gamma_{O_KB_K} \right\|_{2,op} \left\| (\tilde{\Gamma}_{O_K})^{-1} \right\|_{2,op} \left\|(\Gamma_{O_K})^{-1} \right\|_{2,op} \left\| \tilde{\Gamma}_{O_K} - \Gamma_{O_K} \right\|_{2,op}
\\ & & + \left\|(\tilde{\Gamma}_{O_K})^{-1}\right\|_{2,op} \left\|\hat{\Gamma}_{O_KB_K}- \Gamma_{O_KB_K} \right\|_{2,op} .
\end{eqnarray*}

But, we have,

\begin{equation*}\left\| \Gamma_{O_KB_K}\right\|_{2,op} \leq m',   \end{equation*} as an extracted operator of $\Gamma$.

 \begin{equation*}\left\|(\Gamma_{O_K})^{-1} \right\|_{2,op} \leq \frac{1}{m},  \end{equation*} as the inverse of a principal minor of $\Gamma$.

\begin{equation*} \left\| (\tilde{\Gamma}_{O_K})^{-1} \right\|_{2,op} \leq \frac{4}{m}  , \end{equation*} thanks to the regularization.
Furthermore,
 \begin{equation*}\left\| \tilde{\Gamma}_{O_K} - \Gamma_{O_K} \right\|_{2,op} \leq K(N)\sup_{p\leq 2 K(N)} \left\{\left|\hat{r}_N(p)-r(p)\right|\right\}+ 
 \left|\hat{\alpha}\right|. \end{equation*}

So the regularization also gives
 \begin{equation*}\left\|\hat{\Gamma}_{O_KB_K}- \Gamma_{O_KB_K} \right\|_{2,op}  \leq K(N)\left(\sup_{p\leq 2 K(N)} \left\{ \left|\hat{r}_N(p)-r(p)\right| \right\}\right) .\end{equation*}

 \begin{equation*} \left| \hat{\alpha}\right| =  \left|-\min{\hat{f}_K^N} \ind{\min{\hat{f}_K^N} \leq 0}+\frac{m}{4}\ind{\min{\hat{f}_K^N} \leq \frac{m}{4}}   \right|.
 \end{equation*}

So, \begin{equation*} \left| \hat{\alpha}\right| \leq  (2K(N)+1)\sup_{p\leq 2K(N)} \left\{ \hat{r}_N(p)-r(p)\right\}+\frac{m}{4}\ind{\min{\hat{f}_K^N} \leq \frac{m}{4}}.
 \end{equation*}

For the last inequality, we used the following lemma, proved in the next section.
\begin{lem}\label{lemspec}
The empirical spectral density is such that, for $N$ large enough
\begin{equation*}
\left\| \hat{f}_{K(N)}^N - f^\star    \right\|_\infty \leq  (2K(N)+1)\sup_{p\leq 2K(N)} \left\{ \hat{r}_N(p)-r(p)\right\} +\frac{m}{4}.
\end{equation*}
\end{lem}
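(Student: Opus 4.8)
The plan is to split $\hat f_{K}^{N}-f^\star$ into a \emph{fluctuation} part, carried by the estimated autocovariances inside the window, and a \emph{truncation} part, carried by the tail of the true autocovariances beyond the window. Writing both functions as Fourier series (recall $f^\star(t)=\sum_{k}r_k e^{ikt}$ and $\hat f_K^N(t)=\sum_{|p|\le K}\hat r^{(N)}(p)e^{ipt}$) and subtracting, one gets for every $t\in[0,2\pi)$, with $K=K(N)$,
$$\hat f_{K}^{N}(t)-f^\star(t)=\sum_{p=-K}^{K}\bigl(\hat r^{(N)}(p)-r_p\bigr)e^{ipt}\;-\sum_{|p|>K}r_p\,e^{ipt}.$$
I would then bound the two sums separately in the sup norm and recombine by the triangle inequality.

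For the fluctuation part, since $|e^{ipt}|=1$, the triangle inequality over the $2K+1$ terms gives
$$\Bigl\|\sum_{p=-K}^{K}\bigl(\hat r^{(N)}(p)-r_p\bigr)e^{ip\cdot}\Bigr\|_\infty\le\sum_{p=-K}^{K}\bigl|\hat r^{(N)}(p)-r_p\bigr|\le(2K+1)\sup_{|p|\le K}\bigl|\hat r^{(N)}(p)-r_p\bigr|,$$
which is exactly the first term on the right-hand side of the claim (the statement takes the supremum over the wider range $p\le 2K$, so the inequality is only loosened; the absolute values inside the supremum are understood). This step is elementary.

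For the truncation part I would invoke the Sobolev regularity. Under Assumption~\ref{a:sobol} one has $f^\star\in W_s$ with $s\ge 1$, so $\sum_k |k|^{2s}r_k^2<\infty$. A Cauchy--Schwarz split then yields
$$\Bigl\|\sum_{|p|>K}r_p e^{ip\cdot}\Bigr\|_\infty\le\sum_{|p|>K}|r_p|\le\Bigl(\sum_{|p|>K}|p|^{2s}r_p^2\Bigr)^{1/2}\Bigl(\sum_{|p|>K}\frac{1}{|p|^{2s}}\Bigr)^{1/2}.$$
Since $s\ge 1$ the second factor is finite and the first is the tail of a convergent series, so this whole bound tends to $0$ as $K\to\infty$. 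Because $K(N)\to\infty$ by Assumption~\ref{a:k}, for $N$ large enough this truncation term is at most $\frac{m}{4}$.

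The main (indeed the only genuine) obstacle is the truncation estimate: one must quantify, through the $W_s$ norm, that the contribution of the discarded autocovariances is negligible, and then use $K(N)\to\infty$ to absorb it into the constant $\frac{m}{4}$ dictated by the regularization threshold. Once both pieces are in hand, summing them by the triangle inequality produces the claimed inequality and closes the lemma.
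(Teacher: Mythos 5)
Your proposal is correct and follows essentially the same route as the paper: decompose $\hat f_K^N-f^\star$ into the in-window fluctuation term, bounded by $(2K+1)\sup_p|\hat r^{(N)}(p)-r_p|$, plus the tail $\sum_{|p|>K}|r_p|$, which is controlled through the Sobolev regularity of $f^\star$ and absorbed into $\frac{m}{4}$ once $K(N)$ is large enough. The only cosmetic difference is that you quantify the tail via Cauchy--Schwarz against the $W_s$ weights, whereas the paper states the bound $2\|f^\star\|_{W_s}K(N)^{-s+1}$ directly; both yield the same conclusion.
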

This implies
\begin{equation*}
\left| \min{\hat{f}_K^N} \ind{\min{\hat{f}_K^N} \leq 0} \right| \leq (2K(N)+1)\sup_{p\leq 2K(N)} \left\{ \hat{r}_N(p)-r(p)\right\}.
\end{equation*}

So, we obtain,
\begin{eqnarray*}  
\left\| A \right\|_{2,op} & \leq & \frac{4m'}{m^2}\left(K(N)\sup_{p\leq 2 K(N)} \left\{\left|\hat{r}_N(p)-r(p)\right|\right\}+ \left|\hat{\alpha}\right|\right) + \frac{4}{m}K(N)\left(\sup_{p\leq 2 K(N)} \left\{\left|\hat{r}_N(p)-r(p)\right|\right\}\right)
\\ & \leq & \left( \frac{6m'}{m^2} +\frac{4}{m} + 2 + \frac{1}{K(N)} \right)K(N)\left(\sup_{p\leq 2 K(N)} \left\{\left|\hat{r}_N(p)-r(p)\right|\right\}\right) + \frac{m'}{m}\ind{\min{\hat{f}_K^N \leq \frac{m}{4}}} .\end{eqnarray*}

We will use here some other technical lemmas. Their proofs are also postponed to the last section. The first one gives an uniform concentration result on the estimator $\hat{r}_N(p)$: 
\begin{lem}\label{lemconc}
Assume that Assumption \ref{a:k} holds. Then, there exists $N_0$ such that, for all $N \geq N_0$, and $x \geq 0$,
\begin{equation*}
\forall p \leq 2K(N), \left|\hat{r}_N(p)-r(p)\right|> 4m' \left(\sqrt{\frac{(\log(K(N))+x)}{N}} + \frac{x}{N}\right),
\end{equation*}
with probability at least $1-e^{-x}$
\end{lem}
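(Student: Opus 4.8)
The plan is to recognize $\hat r_N(p)-r(p)$ as a centered Gaussian quadratic form and to apply a Bernstein-type (Hanson--Wright / Gaussian chaos) tail bound, followed by a union bound over $p\le 2K(N)$. Write $Z=(X_{-N},\dots,X_{-1})^{T}$, a centered Gaussian vector whose covariance $\Sigma$ is the principal minor $\Gamma_{[-N,-1]}$ of $\Gamma$; by Assumption~\ref{a:fbornee} and the spectral inclusion $\operatorname{Sp}(T(f^\star))\subset[m,m']$ one has $\|\Sigma\|_{2,op}\le m'$. Since $\hat r_N(p)=\frac{1}{N-p}\sum_{k=-N}^{-p-1}X_kX_{k+p}$ is unbiased, $\hat r_N(p)-r(p)=Z^{T}A_pZ-\mathbb E[Z^{T}A_pZ]$, where $A_p$ is the symmetric matrix carrying the constant $\frac{1}{2(N-p)}$ on the two $p$-th off-diagonals. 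First I would record the two relevant norms of the standardized form $B_p:=\Sigma^{1/2}A_p\Sigma^{1/2}$.

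Then $\|B_p\|_{2,op}\le\|\Sigma\|_{2,op}\|A_p\|_{2,op}\le m'/(N-p)$ and $\|B_p\|_F^2\le\|\Sigma\|_{2,op}^2\|A_p\|_F^2=m'^2/(2(N-p))$, since $A_p$ has $2(N-p)$ nonzero entries each equal to $\frac{1}{2(N-p)}$. The Gaussian-chaos inequality then gives, for each fixed $p$,
\[
\mathbb P\big(|\hat r_N(p)-r(p)|>t\big)\le 2\exp\Big(-c\,\min\big(\tfrac{(N-p)t^2}{m'^2},\tfrac{(N-p)t}{m'}\big)\Big).
\]
For $p\le 2K(N)$ and $N$ large (Assumption~\ref{a:k}, so $p\ll N$) one may replace $N-p$ by $N$ up to a harmless constant. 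Inverting the two regimes at level $\log K(N)+x$ produces a threshold $\asymp m'\big(\sqrt{(\log K(N)+x)/N}+(\log K(N)+x)/N\big)$. Because Assumption~\ref{a:k} forces $\log K(N)/N\to 0$, there is an $N_0$ beyond which the linear $\log K(N)/N$ contribution is dominated by the square-root term, leaving only the genuinely $x$-linear part $x/N$; absorbing $c$, the factor $2$, and these lower-order pieces into the constant yields the stated $4m'$. The union bound over the $\le 2K(N)+1$ indices $p$ turns each per-$p$ probability $\le e^{-(\log K(N)+x)}=e^{-x}/K(N)$ into a total of order $e^{-x}$, the remaining constant again being swallowed by $4m'$ for $N\ge N_0$.

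The main obstacle I anticipate is bookkeeping the constant $4m'$ and the clean linear term $x/N$ (rather than $(\log K(N)+x)/N$) uniformly over all $p\le 2K(N)$: this requires controlling where the Bernstein regime switches and using $N\ge N_0$ from Assumption~\ref{a:k} to discard the subdominant pieces, while the near-diagonal normalization $1/(N-p)$ must be handled uniformly, which is fine since $K(N)\log K(N)/N\to 0$ guarantees $p\ll N$. I note, finally, a point of direction: the computation above bounds the probability of the displayed event, namely $\mathbb P(|\hat r_N(p)-r(p)|>\text{threshold})\le e^{-x}$, so the strict inequality in the lemma describes the \emph{rare} deviation event, and the clause ``with probability at least $1-e^{-x}$'' is therefore the probability of its \emph{complement}, i.e.\ of $|\hat r_N(p)-r(p)|\le\text{threshold}$ holding simultaneously for all $p\le 2K(N)$. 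This complementary, small-deviation event is precisely what the downstream variance estimate (Lemma~\ref{l:var}) requires, and I would prove the statement in that form, matching the role it plays in Comte's concentration scheme \cite{comte}.
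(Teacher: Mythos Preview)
Your proposal is correct and follows essentially the same route as the paper: write $\hat r_N(p)-r(p)$ as a centered Gaussian quadratic form, apply a Bernstein-type concentration bound, and union over $p\le 2K(N)$; the paper merely invokes Comte's ready-made inequality for Toeplitz quadratic forms (Hanson--Wright phrased via $\|g_p\|_2$ and $\|g_p\|_\infty$ for $g_p(t)=\frac{N}{N-p}\cos(pt)$) instead of computing $\|B_p\|_F$ and $\|B_p\|_{2,op}$ directly. Your reading of the inequality direction is also right: the high-probability event is the complement, and that is exactly how the paper uses the lemma.
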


%}
For ease of notations, we set $C_0 = 4m'\left( \frac{6m'}{m^2} +\frac{4}{m} + 2\right)$ and $C_3 = \frac{m'}{m}.$
For the computation of the mean, the interval $[0,+\infty[$ will be divided into three parts, where only the first contribution is significant, thanks to the exponential concentration. We will prove that the two other parts are negligible.

We obtain, for all $x\geq 0$ 
\begin{equation*}
\left\| A \right\|_{2,op} \leq  (C_0 + o(1)) K(N)\left(\sqrt{\frac{\log(K(N)+x}{N}}+\frac{x}{N}\right) + C_3\ind{\min{\hat{f}_K^N \leq \frac{m}{4}}},
\end{equation*}
with probability at least $1-e^{-x}$
$\\$
$\\$
Set $t_1 = \left(C_0K(N)\sqrt{\frac{\log(K(N))}{N}}\right)^4$.

For $t \in [0,t_1]$, we use the inequality \begin{equation*} \mathbb{P}\left(\left\| A \right\|_{2,op}^4>t\right) \leq 1 .\end{equation*}

We obtain the first contribution to the integral. This is also the non negligible part.
\begin{equation*}
\int_0^{t_1} \mathbb{P}\left(\left\| A \right\|_{2,op}^4>t\right)\dd t = \left(C_0K(N)\sqrt{\frac{\log(K(N))}{N}}\right)^4.
\end{equation*}

$\\$
$\\$
$\\$

Now, set $t_2 = \left(C_0K(N)\sqrt{\frac{\log(K(N))+N}{N}}+C_3\right)^4$.

For $t \in [t_1,t_2]$, we use 
\small{\begin{equation*}
\mathbb{P}\left(\left\| A \right\|_{2,op}^4>\sup \left(C_0^4K(N)^4\left(\frac{\log(K(N))+x}{N}\right)^2,C_0^4K(N)^4\left(\frac{x}{N}\right)^4\right)\right) \leq e^{-x} + \mathbb{P}\left(\min{\hat{f}_K^N \leq \frac{m}{4}}\right).
\end{equation*}}

Notice that the last lemma provides 
\begin{equation*}
\mathbb{P}\left(2K(N)\sup_{p\leq 2 K(N)} \left\{\left|\hat{r}_N(p)-r(p)\right|\right\} > \frac{m}{2}\right) \leq e^{-\frac{Nm^2}{(64K(N)m')^2}}).
\end{equation*}
Indeed, set $x_0(N) = \frac{Nm^2}{(64K(N)m')^2}$.

%For $N$ large enough, we have $4m'\sqrt{\frac{\log(K(N))}{N} + \frac{m^2}{(64K(N)m')^2}} \leq \frac{m}{64K(N)m'}$ and $\frac{m}{4m'K(N)}\leq \frac{1}{2}$.
One can compute that with probability at least $1-e^{-x_0(N)}$,
\begin{eqnarray*}
\sup_{p\leq 2 K(N)} \left\{\left|\hat{r}_N(p)-r(p)\right|\right\} &\leq &4m'\left(\sqrt{\frac{\log(K(N)) +x_0(N) }{N}} + \frac{x_0(N)}{N} \right)
\\ & \leq & 4m'\left(  \sqrt{\frac{\log(K(N))}{N}+ \frac{m^2}{(64K(N)m')^2}} + \frac{m^2}{(64K(N)m')^2}  \right)
\\ & \leq & 4m'\left(  \sqrt{\frac{\log(K(N))}{N}} + \sqrt{\frac{m^2}{(64K(N)m')^2}} + \frac{m^2}{(64K(N)m')^2}  \right)
\\ & \leq & 4m'\left(  \sqrt{\frac{\log(K(N))}{N}} + \frac{m}{(64K(N)m')} + \frac{m^2}{(64K(N)m')^2}  \right)
\\ & \leq & \frac{m}{8K(N)},
\end{eqnarray*}
for $N$ large enough.
Hence, 
\begin{equation*}
\mathbb{P}\left(\min{\hat{f}^N_{K} \leq \frac{m}{4}}\right) \leq e^{-\frac{Nm^2}{(64K(N)m')^2}}.
\end{equation*}
So, we have 
\begin{equation*}
\mathbb{P}\left(\left\| A \right\|_{2,op}^4> \max \left(C_0^4K(N)^4\left(\frac{\log(K(N))+x}{N}\right)^2,C_0^4K(N)^4\left(\frac{x}{N}\right)^4\right) \right) \leq e^{-x}+ e^{-\frac{Nm^2}{(64K(N)m')^2}}.
\end{equation*}

Finally, the following lemma (the proof is again postponed in Appendix) will be useful to transform a probability inequality into an $\mathbb{L}^2$ inequality.
\begin{lem}\label{lemfun}
Let $X$ be a nonnegative random variable such that there exists two one to one maps $f_1$ and $f_2$ and a $C>0$  with 
\begin{equation*}
\forall x \geq 0,  \mathbb{P}\left( X > \sup(f_1(x),f_2(x))\right) \leq e^{-x} + C,
\end{equation*}
then \begin{equation*}
\mathbb{P}\left( X > t \right) \leq e^{-f_1^{-1}(t)}+e^{-f_2^{-1}(t)} + C.
\end{equation*}
\end{lem}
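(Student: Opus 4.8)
The plan is to fix a level $t$ and select, among all parameters $x \geq 0$ compatible with the hypothesis, the one giving the sharpest bound. Since $f_1$ and $f_2$ are one-to-one maps on $[0,+\infty)$ (and, as in the intended application, increasing), their inverses $f_1^{-1}$ and $f_2^{-1}$ are well defined and increasing on the respective ranges. The elementary observation driving the proof is that the supremum $\sup(f_1(x),f_2(x))$ can be forced below $t$ precisely by keeping $x$ below both $f_1^{-1}(t)$ and $f_2^{-1}(t)$; one then pushes $x$ as large as this constraint allows, so that the exponential term $e^{-x}$ is as small as possible.

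Concretely, I would set $x = \min(f_1^{-1}(t), f_2^{-1}(t))$. By monotonicity this yields $f_1(x) \leq t$ and $f_2(x) \leq t$, hence $\sup(f_1(x), f_2(x)) \leq t$, so that the event $\{X > t\}$ is contained in $\{X > \sup(f_1(x), f_2(x))\}$. Monotonicity of the probability measure together with the hypothesis then gives
$$\mathbb{P}(X > t) \leq \mathbb{P}(X > \sup(f_1(x), f_2(x))) \leq e^{-x} + C.$$

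It remains to unwind the choice of $x$. Writing $a = f_1^{-1}(t)$ and $b = f_2^{-1}(t)$, one has $e^{-x} = e^{-\min(a,b)} = \max(e^{-a}, e^{-b}) \leq e^{-a} + e^{-b}$, which is exactly $e^{-f_1^{-1}(t)} + e^{-f_2^{-1}(t)}$. Combining this with the previous display yields the claimed inequality.

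The argument is essentially a change of variable in the tail bound, so I do not expect any serious obstacle; the only points requiring care are that $f_1$ and $f_2$ be increasing (so that the one-to-one hypothesis translates into the correct monotonicity of $f_1^{-1}$ and $f_2^{-1}$, and hence into the inclusion of events) and that $t$ lie in the common range of $f_1$ and $f_2$ so that both inverses are defined. Both conditions hold in the application to Lemma~\ref{l:var}, where $f_1$ and $f_2$ are the explicit increasing bijections $x \mapsto C_0^4 K(N)^4 \big(\tfrac{\log(K(N))+x}{N}\big)^2$ and $x \mapsto C_0^4 K(N)^4 \big(\tfrac{x}{N}\big)^4$ of $[0,+\infty)$.
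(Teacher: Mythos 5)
Your proof is correct and is essentially the same argument as the paper's: the paper fixes $x$, sets $t=\sup(f_1(x),f_2(x))$ and splits into cases according to which function attains the supremum, which amounts to exactly your choice $x=\min(f_1^{-1}(t),f_2^{-1}(t))$ viewed from the other direction. Your version is slightly more explicit about the monotonicity needed and the inclusion of events, but there is no substantive difference.
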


So, thanks to lemma \ref{lemfun}, we have 
\begin{equation*}
\mathbb{P}\left(\left\| A \right\|_{2,op}^4>t\right) \leq e^{-N\sqrt{\frac{t}{C_1^4K(N)^4}+\log(K(N))}} + e^{-N\sqrt[4]{\frac{Nt}{C_1^4K(N)^4}}}
 + e^{-\frac{Nm^2}{(64K(N)m')^2}}.
\end{equation*}

Now, we will prove that each term can be neglected.
Integrating by part, we obtain
\begin{eqnarray*}
\int_{t_1}^{t_2}  e^{-\sqrt{\frac{t}{C_0^4K(N)^4}+\log(K(N))}}\dd t &\leq& \int_{t_1}^{\infty}  e^{-N\sqrt{\frac{t}{C_0^4K(N)^4}+\log(K(N))}}\dd t 
\\ & \leq & \left[\frac{-2\sqrt{t}C_0^2K(N)^2}{N}e^{-N\sqrt{\frac{t}{C_0^4K(N)^4}+\log(K(N))}} \right]_{t_1}^{\infty}
\\ & & + \int_{t_1}^{\infty}\frac{C_0^2K(N)^2}{N\sqrt{t}}  e^{-N\sqrt{\frac{t}{C_0^4K(N)^4}+\log(K(N))}} \dd t
\\ & \leq & \frac{2\log(K(N))C_0^4K(N)^4}{N^2} + \frac{2C_0^4K(N)^4}{N^2}
\\ & = & o\left( \left(C_0K(N)\sqrt{\frac{\log(K(N))}{N}}\right)^4  \right).
\end{eqnarray*}
 
Then,
\begin{eqnarray*}
\int_{t_1}^{t_2} e^{-\sqrt[4]{\frac{Nt}{C_0^4K(N)^4}}}\dd t &\leq& t_2  e^{-N\sqrt[4]{\frac{t_1}{C_0^4K(N)^4}}}
\\ & \leq & t_2 e^{-\sqrt{N\log(K(N))}}
\\ & = & o\left( \left(C_0K(N)\sqrt{\frac{\log(K(N))}{N}}\right)^4  \right).
\end{eqnarray*}

So that,
\begin{eqnarray*}
\int_{t_1}^{t_2} e^{-x_0(N)}\dd t &\leq& t_2  e^{-\frac{Nm^2}{(64K(N)m')^2}}
\\ & = & o\left( \left(C_0K(N)\sqrt{\frac{\log(K(N))}{N}}\right)^4  \right).
\end{eqnarray*}

Leading to 
\begin{equation*}
\int_{t_1}^{t_2} \mathbb{P}\left(\left\| A \right\|_{2,op}^4>t\right) \dd t=  o\left( \left(C_0K(N)\sqrt{\frac{\log(K(N))}{N}}\right)^4  \right),
\end{equation*}
$\\$
$\\$
$\\$
Finally, for $t \in [t_2,+\infty[$, we use 
\begin{equation*}
\mathbb{P}\left(\left\| A \right\|_{2,op}^4>\max \left(\left(C_0K(N)\sqrt{\frac{\log(K(N))+x}{N}}+C_3\right)^4,\left(C_0K(N)\frac{x}{N}+C_3\right)^4\right)\right) \leq e^{-x}.
\end{equation*}

Thanks to lemma \ref{lemfun}, we get
\begin{equation*}
\mathbb{P}\left(\left\| A \right\|_{2,op}^4>t\right) \leq e^{-N\left(\frac{\sqrt[4]{t}-C_3}{C_0K(N)}\right)^2+\log(K(N))} + e^{-N\frac{(\sqrt[4]{t}-C_3)}{C_0K(N)}}.
\end{equation*}

So, integrating by part once more, we obtain
\begin{eqnarray*}
\int_{t_2}^{+\infty}  e^{-N\left(\frac{\sqrt[4]{t}-C_2}{C_0K(N)}\right)^2+\log(K(N))}\dd t &\leq&\int_{\sqrt[4]{t_2}-C_3}^{+\infty} 4(u+C_3)^3 e^{-N\left(\frac{u}{C_0K(N)}\right)^2+\log(K(N))}\dd u
\\&\leq & \left[P_1(u,N,K(N))    e^{-N\left(\frac{u}{C_0K(N)}\right)^2+\log(K(N))} \right]_{\sqrt[4]{t_2}-C_3}^{+\infty}
\\ & \leq & P_1(u,N,K(N)) e^{-N}
\\ & \leq & o\left( \left(C_0K(N)\sqrt{\frac{\log(K(N))}{N}}\right)^4  \right).
\end{eqnarray*}
Here, $P_1(u,N,K(N))$ is a polynomial of degree $3$ in $u$ and is rational function in $N$ and $K(n)$.

Furthermore,
\begin{eqnarray*}
\int_{t_2}^{+\infty} e^{-N\frac{(\sqrt[4]{t}-C_3)}{C_0K(N)}} \dd t &\leq&\int_{\sqrt[4]{t_2}-C_3}^{+\infty} 4(u+C_3)^3 e^{-N\frac{u}{C_0K(N)}}\dd u
\\&\leq & \left[P_2(u,N,K(N))    e^{-N\frac{u}{C_0K(N)}} \right]_{\sqrt[4]{t_2}-C_3}^{+\infty}
\\ & \leq & P_2(u,N,K(N))e^{-\sqrt{N(\log(K(N))+N)}}
\\ & \leq & o\left( \left(C_0K(N)\sqrt{\frac{\log(K(N))}{N}}\right)^4  \right),
\end{eqnarray*}
where $P_2(u,N,K(N))$ is a polynomial of degree $3$ in $u$ and is rational function in $N$ and $K(n)$.

We proved here 
\begin{equation*}
\int_0^\infty \mathbb{P}\left(\left\| A \right\|_{2,op}^4 > t \right) \leq C_0^4K(N)^4 (\frac{\log(K(N))}{N})^2 + o(K(N)^4 (\frac{\log(K(N))}{N})^2).
\end{equation*}
This ends the proof.
\end{proof}

\subsection{Technical lemmas}

%And then, 
%\begin{equation*}
%\normt  \left( \hat{R}^N_{O_K} + 4 \frac{C_0K(N)\sqrt{\log(K(N))+x}}{\sqrt{N}} I_{K(N)}\right)^{-1}   \hat{R}^N_{O_KB_K} - \left( %R_{O_K}\right)^{-1}R_{O_KB_K} \normt_2 \leq  2\alpha^2(\frac{m'^2}{m^2}+\frac{m'}{m})\frac{K(N) \sqrt{\log N}}{\sqrt{N}}
%\end{equation*}

We prove now the technical lemmas:
%\begin{lem}
%For all admissible speed $K(N)$, there exists $N_0$ such that, for all $N \geq N_0$, and $x \geq 0$,
%\begin{equation*}
%\forall p \leq 2K(N), \left|\hat{r}^N(p)-r(p)\right|> 4m' (\sqrt{\frac{(\log(K(N))+x)}{N}} + \frac{x}{N}),
%\end{equation*}
%with probability at least $1-e^{-x}$
%\end{lem}

\begin{proof}{of Lemma~\ref{lemconc}}

Notice that $\hat{r}^{(N)}(p) = X^T T_N(g_p) X$ with $g_p(t) = \frac{N}{N-p}\cos(pt)$.
We use the following  proposition from Comte \cite{comte}. Let $X_1,\cdots,X_n$ be a centered Gaussian stationary sequence and $g$ a bounded function such that $T_n(g)$ is a symmetric non negative matrix. Then the following concentration inequality holds for $Z_n(g) = \frac{1}{n}\left(X^TT_n(g)X - \mathbb{E}[X^TT_n(g)X] \right)$:
\begin{equation*}
\mathbb{P}\left(Z_n(g) \geq 2\left\|f\right\|_\infty\left(\left\|g\right\|_2\sqrt{x} + \left\|u\right\|_\infty x\right)\right)\leq e^{-nx}.
\end{equation*}

By applying this result respectively with $g_p$ and $-g_p$ and we obtain 
\begin{equation*}
\mathbb{P}\left( \left|\hat{r}^{(N)}(p) - r(p)\right| > 2m'\frac{N}{N-p}(\sqrt{x}+x)\right) \leq 2e^{-Nx}.
\end{equation*} 
or, equivalently,
\begin{equation*}
\left|\hat{r}^{(N)}(p) - r(p)\right| > 2m'\frac{N}{N-p}(\sqrt{\frac{x+\log(K(N))+2\log(2)}{N}}+\frac{x+\log(K(N))+2\log(2)}{N}), 
\end{equation*} 
with probability lower than $\frac{e^{-x}}{2K(N)}$.
By taking an equivalent, we obtain that there exists $N_0$ such that, for all $N \geq N_0$, for all $p \leq 2K(N)$

\begin{equation*}
\mathbb{P}\left( \left|\hat{r}^{(N)}(p) - r(p)\right| > 4m'\sqrt{\frac{x+\log(K(N))}{N}} + \frac{x}{N}\right) \leq \frac{e^{-x}}{2K(N)}.
\end{equation*}
\end{proof}

%\begin{lem}
%Let $X$ be a nonnegative random variable such that there exists two one to one maps $f_1$ and $f_2$  with 
%\begin{equation*}
%\mathbb{P}\left( X > \sup(f_1(x),f_2(x))\right) \leq e^{-x} + C,
%\end{equation*}
%then \begin{equation*}
%\mathbb{P}\left( X > t \right) \leq e^{-f_1^{-1}(t)}+e^{-f_2^{-1}(t)} + C
%\end{equation*}
%\end{lem}

\begin{proof}{of Lemma~\ref{lemfun}}

We set $t = \sup(f_1(x),f_2(x))$
If $t = f_1(x)$ then 
\begin{equation*}
\mathbb{P}\left( X > t \right) \leq e^{-f_1^{-1}(t)} + C \leq e^{-f_1^{-1}(t)}+e^{-f_2^{-1}(t)} + C.
\end{equation*}
Symmetrically, if  $t = f_2(x)$ we have 
\begin{equation*}
\mathbb{P}\left( X > t \right) \leq e^{-f_1^{-1}(t)}+e^{-f_2^{-1}(t)} + C.
\end{equation*}
\end{proof}

%\begin{lem}
%The empirical spectral density verify, for $N$ large enough
%\begin{equation*}
%\left|\hat{f}_{K(N)}^N - f^\star\right|_\infty \leq  (2K(N)+1)\sup_{p\leq 2K(N)} \left\{ \hat{r}_N(p)-r(p)\right\} +\frac{m}{4}
%\end{equation*}
%\end{lem}
\begin{proof}{of Lemma~\ref{lemspec}}
It is sufficient to ensure that the bias is small enough. Choose $N_0$ such that 
\begin{equation*}
2 \left\|f^\star\right\|_{H_s}K(N)^{-s+1} \leq \frac{m}{4}.
\end{equation*}
Then we use
\begin{eqnarray*}
\left\|\hat{f}^N_{K(N)} - f^\star \right\|_\infty & \leq & \sum_{p = -K(N)}^{K(N)} \left|\hat{r}^N(p) - r(p)\right| + 2\sum_{p > K(N)}\left|r(p)\right|
\\& \leq & (2K(N)+1)\sup_{p\leq 2K(N)} \left\{ \hat{r}_N(p)-r(p)\right\} + 2 \left\|f^\star\right\|_{H_s}K(N)^{-s+1}
\\& \leq & (2K(N)+1)\sup_{p\leq 2K(N)} \left\{ \hat{r}_N(p)-r(p)\right\} +\frac{m}{4}.
\end{eqnarray*}
This ends the proof of the last lemma.
\end{proof}

\textbf{Acknowledgement :}

\normalsize
The authors
wish to thank the anonymous referee for a careful reading of the manuscript and
for providing very useful comments.

% We warmly thanks an anonymous referee for all his suggestions and remarks.

\bibliographystyle{plain}

\def\cprime{$'$}

%  \bibliography{biblio100329}

\begin{thebibliography}{}

\end{thebibliography}


\begin{thebibliography}{10}

\bibitem{pred2}
Anestis Antoniadis, Efstathios Paparoditis, and Theofanis Sapatinas.
\newblock A functional wavelet-kernel approach for time series prediction.
\newblock {\em J. R. Stat. Soc. Ser. B Stat. Methodol.}, 68(5):837--857, 2006.

\bibitem{pred4}
Anestis Antoniadis, Efstathios Paparoditis, and Theofanis Sapatinas.
\newblock Bandwidth selection for functional time series prediction.
\newblock {\em Statist. Probab. Lett.}, 79(6):733--740, 2009.

\bibitem{AzDa}
Robert Azencott and Didier Dacunha-Castelle.
\newblock {\em Series of irregular observations}.
\newblock Applied Probability. A Series of the Applied Probability Trust.
  Springer-Verlag, New York, 1986.
\newblock Forecasting and model building.

\bibitem{bickel}
Peter~J. Bickel and Elizaveta Levina.
\newblock Regularized estimation of large covariance matrices.
\newblock {\em Ann. Statist.}, 36(1):199--227, 2008.

\bibitem{Lili10}
Jeremie Bigot, Rolando Biscay, Jean-Michel Loubes, and Lilian~Muniz Alvarez.
\newblock Nonparametric estimation of covariance functions by model selection,
  2009.

\bibitem{bondon}
Pascal Bondon.
\newblock Prediction with incomplete past of a stationary process.
\newblock {\em Stochastic Process. Appl.}, 98(1):67--76, 2002.

\bibitem{bondon2}
Pascal Bondon.
\newblock Influence of missing values on the prediction of a stationary time
  series.
\newblock {\em J. Time Ser. Anal.}, 26(4):519--525, 2005.

\bibitem{bottcher}
Albrecht B{\"o}ttcher and Bernd Silbermann.
\newblock {\em Analysis of {T}oeplitz operators}.
\newblock Springer Monographs in Mathematics. Springer-Verlag, Berlin, second
  edition, 2006.
\newblock Prepared jointly with Alexei Karlovich.

\bibitem{davis}
Peter~J. Brockwell and Richard~A. Davis.
\newblock {\em Introduction to time series and forecasting}.
\newblock Springer Texts in Statistics. Springer-Verlag, New York, second
  edition, 2002.
\newblock With 1 CD-ROM (Windows).

\bibitem{comte}
Fabienne Comte.
\newblock Adaptive estimation of the spectrum of a stationary {G}aussian
  sequence.
\newblock {\em Bernoulli}, 7(2):267--298, 2001.

\bibitem{matrixhandbook}
George A.~F. Seber.
\newblock {\em A matrix handbook for statisticians}.
\newblock Wiley Series in Probability and Statistics. Wiley-Interscience [John
  Wiley \& Sons], Hoboken, NJ, 2008.

\bibitem{Ste}
Michael~L. Stein.
\newblock {\em Interpolation of spatial data}.
\newblock Springer Series in Statistics. Springer-Verlag, New York, 1999.
\newblock Some theory for Kriging.

\bibitem{krig2}
Michael~L. Stein and Mark~S. Handcock.
\newblock Some asymptotic properties of kriging when the covariance function is
  misspecified.
\newblock {\em Math. Geol.}, 21(2):171--190, 1989.

\bibitem{schurbook}
Fuzhen Zhang, editor.
\newblock {\em The {S}chur complement and its applications}, volume~4 of {\em
  Numerical Methods and Algorithms}.
\newblock Springer-Verlag, New York, 2005.

\end{thebibliography}
\end{document}